 \newtheorem{thm}{Theorem}[section]
 \theoremstyle{definition}
 \newtheorem{defn}[thm]{Definition}
 \theoremstyle{remark}
 \numberwithin{equation}{section}
\begin{document}

%-------------------------------------------------------------------------
% editorial commands: to be inserted by the editorial office
%
%\firstpage{1} \volume{228} \Copyrightyear{2004} \DOI{003-0001}
%
%
%\seriesextra{Just an add-on}
%\seriesextraline{This is the Concrete Title of this Book\br H.E. R and S.T.C. W, Eds.}
%
% for journals:
%
%\firstpage{1}
%\issuenumber{1}
%\Volumeandyear{1 (2004)}
%\Copyrightyear{2004}
%\DOI{003-xxxx-y}
%\Signet
%\commby{inhouse}
%\submitted{March 14, 2003}
%\received{March 16, 2000}
%\revised{June 1, 2000}
%\accepted{July 22, 2000}
%
%
%
%---------------------------------------------------------------------------
%Insert here the title, affiliations and abstract:
%

\title[Special curves and Hamiltonian systems]{Special curves and Hamiltonian systems from polynomial eigenfunctions of an elliptic operator}

%----------Author 1
\author[J.G.Escudero]{Juan Garc\'{\i}a Escudero}
\address{%
Madrid, Spain
}
\email{jjgemplubg@gmail.com}
%----------classification, keywords, date
\keywords{singularities, special plane curves, polynomial dynamical systems}

\subjclass{14B05, 14H50, 35J15, 37J38}

\date{April, 2024}
%----------additions
\dedicatory{}
%%% ----------------------------------------------------------------------

\begin{abstract}
A family of polynomials linked to the set of the deltoid tangents and its associated algebraic hypersurfaces has been presented in recent years. In this paper we study some related maximising and free plane curves. We also analyse the bifurcations on polynomial Hamiltonian dynamical systems defined from such a family. 

\end{abstract}
%%% ----------------------------------------------------------------------
\maketitle
%%% ----------------------------------------------------------------------
%\tableofcontents
\section{Introduction}
\bigskip\par

We study some plane curves and Hamiltonian systems related to a  family of polynomials introduced in the last decade for the construction of algebraic hypersurfaces with many simple singularities  (\cite{esc14, esc16, esc17, esc24} and references within). 
\par
The one-parameter family of degree $d$ polynomials $P_{d}(x,y,\mu)$ is discussed in Section 2, where we also consider the connection of $P_{d}(x,y,\mu)$ with eigenfunctions of a symmetric diffusion operator (see \cite{esc24} for a generalisation to $n$ variables) which corresponds to a Laplace-Beltrami operator for a metric with zero scalar curvature. 
\par
Special curves called free and maximising plane curves are significant in the field of algebraic geometry.  Maximising curves can be used for instance in the construction of algebraic surfaces having the maximal Picard number. A link between maximising curves and free curves has been found in \cite{dim23}. In Section 3 we study some free and maximising curves related to $P_{d}(x,y,\mu)$. In particular we discuss maximising curves of degrees eight and ten. A degree nine curve which is free but not maximising is presented. We also show that a certain maximising decic has a Miyaoka-Kobayashi number close to the maximum possible.
\par
The critical points of $P_{d}(x,y,\mu)$ and some of their properties are presented in Section 4. We use them in Section 5 in order to analyse the bifurcation or catastrophe sets of the associated polynomial Hamiltonian dynamical systems. We find six bifurcation points if $d$ is divisible by three and two bifurcation points otherwise.

\par
\section{A family of degree $d$ polynomials as eigenfunctions of an elliptic operator}
\bigskip\par
We write $P_{d}(x,y,\mu)$ instead of the polynomials $\hat{J}_{d,\tau=\mu}(x,y)$ introduced in \cite{esc17}.  Up to a scaling factor, the polynomials $P_{d}(x,y,\mu)$ are connected to the union of the lines with $d$ orientations linked with the affine Weyl group of the root lattice $\bold{A}_{2}$ and the set of the deltoid tangents  \cite{esc23}. The lines have equations  $L_{d,\mu,s}(x,y):=y+({\rm cos}2\varphi-x){\rm tan}\varphi+{\rm sin}2\varphi=0, \varphi=\frac{6s-1}{6d}-\frac{\mu}{\pi d}, s=- \lfloor \frac{d-2}{2} \rfloor,- \lfloor \frac{d-2}{2} \rfloor+1,..., \lfloor \frac{d+1}{2} \rfloor$ with $(x,y)\in {\Bbb{R}}^2$ and $\mu \in \Bbb{R}$. The basic degree-$d$ polynomials are 
$$P_{d}(x,y,\mu) := \lambda_{d,\mu}  \prod_{s} L_{d,\mu,s}(x,y)$$ 
where $\lambda_{d,\mu} =(-1)^{m}2 d$ if $\mu = (6m-3d-1)\frac{\pi}{6}$ (in this case the line  $L_{d,\mu,s}(x,y)=0$ parallel to the $y$-axis is interpreted as the line $x+1=0$) and  $\lambda_{d,\mu}  = 2{\rm cos}(\mu+\frac{d\pi}{2}+\frac{2\pi}{3})$ if $\mu \neq (6m-3d-1)\frac{\pi}{6}$, $m \in \Bbb{Z}$.
\par
Symmetric diffusion operators defined on an open subset $\Omega  \subset  {\Bbb{R}}^n$ can be written  \cite{bak21}
      \begin{equation} 
  {\mathcal{L}}:= \sum_{ij}g^{ij}(x) \partial_{x_{i}x_{j}}^{2}+\sum_{i}b^{i}(x)\partial_{x_{i}}
    \end{equation}
  such that the symmetric matrix $(g^{ij}(x))$ is non negative at every point $x=(x_{1},x_{2},...x_{n}) \in \Omega$. ${\mathcal{L}}$ is elliptic in the interior of $\Omega$ and an operator of this kind in ${\Bbb{R}}^2$, denoted by ${\mathcal{L}}_{2}$ in \cite{esc17}, has 
      \begin{equation}   
g^{11}(x_{1},x_{2})=-\frac{1}{4}(3x_{1}^2-x_{2}^2-6x_{1}-9), g^{12}(x_{1},x_{2})=g^{21}(x_{1},x_{2})=-\frac{1}{2}(2x_{1}x_{2}+3x_{2}), 
$$  $$
g^{22}(x_{1},x_{2})=-\frac{1}{4}(3x_{2}^2-x_{1}^2+6x_{1}-9), b^{1}(x_{1},x_{2})=-x_{1}, b^{2}(x_{1},x_{2})=-x_{2}
  \end{equation}
  
 The operator  ${\mathcal{L}}_{2}$ is elliptic in the interior of the region $-Q_{\delta}(x_{1},x_{2})\geq 0$, where  $Q_{\delta}(x_{1},x_{2})=(x_{1}^2 + x_{2}^2)^2 - 8 (x_{1}^3 - 3 x_{1} x_{2}^2) + 18 (x_{1}^2 + x_{2}^2) - 27=0$ is the equation of a Steiner hypocycloid or deltoid $\delta$. 
 \par
  For each $\mu$, $J_{d}(x_{1},x_{2},\mu):=P_{d}(x_{1},x_{2},\mu)-2 {\rm cos}3 \mu$ is an eigenfunction of ${\mathcal{L}}_{2}$ \cite{esc17}: 
       \begin{equation}  
  {\mathcal{L}}_{2}J_{d}(x_{1},x_{2},\mu)=-d^{2}J_{d}(x_{1},x_{2},\mu)
      \end{equation} 
  \par
${\mathcal{L}}_{2}$ has the form of the Laplace-Beltrami operator $\frac{1}{\sqrt{|g|}}\sum_{i,j}\partial_{x_{i}}(\sqrt{|g|}g^{ij}\partial_{x_{j}})$
where $g^{ij}$ is the (co)metric or inverse metric given in Eq. (2.2). The metric is  $g_{ij}=(g^{ij})^{-1}$ and $|g|=\det g_{ij}=-\frac{3}{16}Q_{\delta}$.
By computing the Christoffel symbols  $\Gamma^{m}_{jk}:=\frac{1}{2}\sum_{i}g^{im}(\partial_{x_{k}}g_{ij}+\partial_{x_{j}}g_{ki}-\partial_{x_{i}}g_{jk})$
and the Riemann curvature tensor $R^{m}_{ijk}:=\partial_{x_{j}}\Gamma^{m}_{ki}-\partial_{x_{k}}\Gamma^{m}_{ij}+\sum_{s}\Gamma^{s}_{ki}\Gamma^{m}_{js}-\sum_{s}\Gamma^{s}_{ij}\Gamma^{m}_{ks}$ we obtain $R^{m}_{ijk}=0$, hence also both the Ricci tensor $R_{ik}:=\sum_{j}R^{j}_{ijk}$ and the scalar curvature  
$R:=\sum_{i,k}g^{ik}R_{ik}$ are zero. 
\par
The operator ${\mathcal{L}}_{2}$ can be interpreted as a projection of the Euclidean Laplacian $\Delta_{{\Bbb{R}}^2}$. We use the operator
       \begin{equation} 
{\mathcal{G}}_{{\mathcal{L}}}(f,g):=\frac{1}{2}({\mathcal{L}}(fg)-f{\mathcal{L}}g-g{\mathcal{L}}f)
       \end{equation} 
       and we see that $g^{jk}={\mathcal{G}}_{{\mathcal{L}}}(x_{j},x_{k}), b^{j}={\mathcal{L}}x_{j}$. 
\par       
               In terms of $z=x_{1}+\sqrt{-1} x_{2}$ and $\bar{z}=x_{1}-\sqrt{-1} x_{2}$ the operator ${\mathcal{L}}_{2}$ has a simpler form, denoted by  ${\mathcal{L}}_{2C}$, with \cite{esc17} 
        \begin{equation}       
       {\mathcal{G}}_{2C}(z,z)=-z^2+3\bar{z},  {\mathcal{G}}_{2C}(z,\bar{z})={\mathcal{G}}_{2C}(\bar{z},z)=9-z\bar{z}, {\mathcal{G}}_{2C}(\bar{z},\bar{z})=-\bar{z}^2+3z, {\mathcal{L}}_{2C}z=-z, {\mathcal{L}}_{2C}\bar{z}=-\bar{z} 
       \end{equation}
       \par   
       Under a change of variables $y=(y_{k}(x))_{k=1,2,...}$ we have in general 
                 \begin{equation} 
{\mathcal{L}}\phi (y)= \sum_{ij}{\mathcal{G}}_{{\mathcal{L}}}(y_{i}, y_{j}) \partial_{y_{i}y_{j}}^{2}\phi (y)+\sum_{i}({\mathcal{L}}y_{i})\partial_{y_{i}}\phi (y)
       \end{equation}  

\par\noindent
  The so-called generalised cosine (see \cite{esc16} and references therein)
          \begin{equation}  
  h(u,v):=e^ {-2\pi u \sqrt{-1}}+e^ {-2\pi v \sqrt{-1}}+e^ {2\pi (u+v) \sqrt{-1}}
         \end{equation}
 can be used to make a change of variables which shows that ${\mathcal{L}}_{2}$ is a projection of $\Delta_{{\Bbb{R}}^2}$. We identify the sets ${\Bbb{R}}^2$ and ${\Bbb{C}}$ and we make the substitution $ 2\pi u=-\omega \cdot z, 2\pi v=-{\bar{\omega}} \cdot z$ in Eq.(2.7), where $z=x_{1}+\sqrt{-1} x_{2} \in {\Bbb{C}}$,  $z_{1}\cdot z_{2}={\rm Re}(z_{1}\bar{z}_{2})$ and $1, \omega, \bar{\omega}$ are the cubic roots of unity. The variable change $Z: {\Bbb{R}}^2 \rightarrow {\rm Int} \delta$, where ${\rm Int} \delta$ denotes the interior of the region whose border is the deltoid $\delta$, is
        \begin{equation}  
Z(z)=e^ {(1 \cdot z)\sqrt{-1}}+ e^ {(\omega \cdot z)\sqrt{-1}}+ e^ {({\bar{\omega}} \cdot z)\sqrt{-1}}
         \end{equation}
        It  can been shown that $\Delta_{{\Bbb{R}}^2} Z=-Z, \Delta_{{\Bbb{R}}^2} \bar{Z}=-\bar{Z}$ and ${\mathcal{G}}_{\Delta_{{\Bbb{R}}^2}}={\mathcal{G}}_{2C}$  where  ${\mathcal{G}}_{2C}$ is given by Eq.(2.5) for $z=Z, \bar{z}=\bar{Z}$, hence
              \begin{equation}  
        {\mathcal{L}}_{2C}f(Z, \bar{Z}) =\Delta_{{\Bbb{R}}^2} f(Z, \bar{Z})
             \end{equation}    
 \par\noindent            
   The Jacobian $|\frac{\partial(x_{1}^{{'}},x_{2}^{{'}})}{\partial(x_{1},x_{2})}|=4 \sqrt{3} ({\rm cos} (3 x_{1}) - {\rm cos} (\sqrt{3} x_{2})) {\rm sin} (\sqrt{3} x_{2})$, where $x_{1}^{{'}}={\rm Re} Z(z),x_{2}^{{'}}={\rm Im} Z(z)$, is zero in three families of lines, with equations $3x_{1}= \pm \sqrt{3} x_{2}+2 k \pi, \sqrt{3} x_{2}= l \pi, k,l\in {\Bbb{Z}}$, forming a periodic triangular tiling in ${\Bbb{R}}^2$. $Z$ is a one-to-one map from the border of a triangle, which is a fundamental region of the affine Weyl group $\bold{A}_{2}$, to $\delta$. Also $Z$ is a one-to-one map from the interior of the triangle onto ${\rm Int} \delta$ and maps ${\Bbb{R}}^2$ onto ${\rm Int} \delta$.

   \section{Some free and maximising curves}

In this Section we first recall some basic facts on free and maximising
curves in $\mathbb{P}^2$ and then we present some examples based on the polynomials $P_{d}(x,y,\mu)$. We also comment on specific curves with high 
Miyaoka-Kobayashi number.

Let ${\rm Der}(S) = \{\partial:= a \cdot \partial_{x} +b  \cdot \partial_{y}+c \cdot \partial_{z}, a,b,c \in  S\}$ be the free $S$-module of ${\Bbb{C}}$-linear derivations of the polynomial ring $S$. The graded  $S$-module of derivations for a reduced curve $C : F= 0$ preserving the ideal $\langle F \rangle$ is denoted by ${\rm D}(F) = \{ \partial \in {\rm Der}(S)  :  \partial F \in \langle F \rangle \}$. If $\delta_{E}= x\partial_{x} +y\partial_{y}+z\partial_{z}$ is the Euler derivation and ${\rm D}_{0}(F) =  \{ \partial \in {\rm Der}(S) : \partial F = 0\}$ is the graded $S$-module of Jacobian syzygies of $F$, then we have  ${\rm D}(F) = {\rm D}_{0}(F) \oplus S\cdot \delta_{E}$.

 \begin{defn} \cite{dim24a} We say that a reduced curve $C : F= 0$ is free if ${\rm D}(F)$, or equivalently ${\rm D}_{0}(F)$, is a free graded $S$-module. The exponents $(d_{1},d_{2})$ of a free curve $C$ are the degrees of a basis for the graded $S$-module ${\rm D}_{0}(F)$ with rank 2.
 \end{defn} 
  
  If $C$ is a curve of even
degree $d$ with only simple singularities then the “number”
of simple singularities (by counting each simple singularity
$X_{k}, X = A,D, E$ by its index $k$) of $C$ is less or equal $\frac{3}{4} d^2-\frac{3}{2} d + 1$ \cite{hir82}. 
Now we use these results to construct algebraic curves with many simple singularities of types $A,D$ and $E$  \cite{esc14,esc17}. The singularities have local equations \cite{arn85} 
$$A_{k}: x^{k+1}+y^2=0; D_{k}: x(x^{k-2}+y^2)=0; E_{6}: x^{3}+y^4=0; E_{7}: x(x^{2}+y^3)=0$$
In \cite{dim23} it is shown that reduced plane curves $C$ of odd degree $d=2m+1$ with only $ADE$ singularities have a total Tjurina number at most $\tau(C)=3m^2+1$. Curves with maximum total Tjurina number of $C$,  denoted by $\tau(C)$, were studied in the 80s  (\cite{hir82},  \cite{per82}). Later studies have motivated the following

    \begin{defn}   \cite{dim24a}
      A curve $C : F = 0$ of degree $d$ having only ADE-singularities is maximising if either $d = 2m$ and $\tau(C) = 3m(m-1) + 1$ or $d = 2m+1$ and $\tau(C) = 3m^2 + 1$.
      \end{defn} 
      
           \begin{figure}[h]
 \includegraphics[width=13pc]{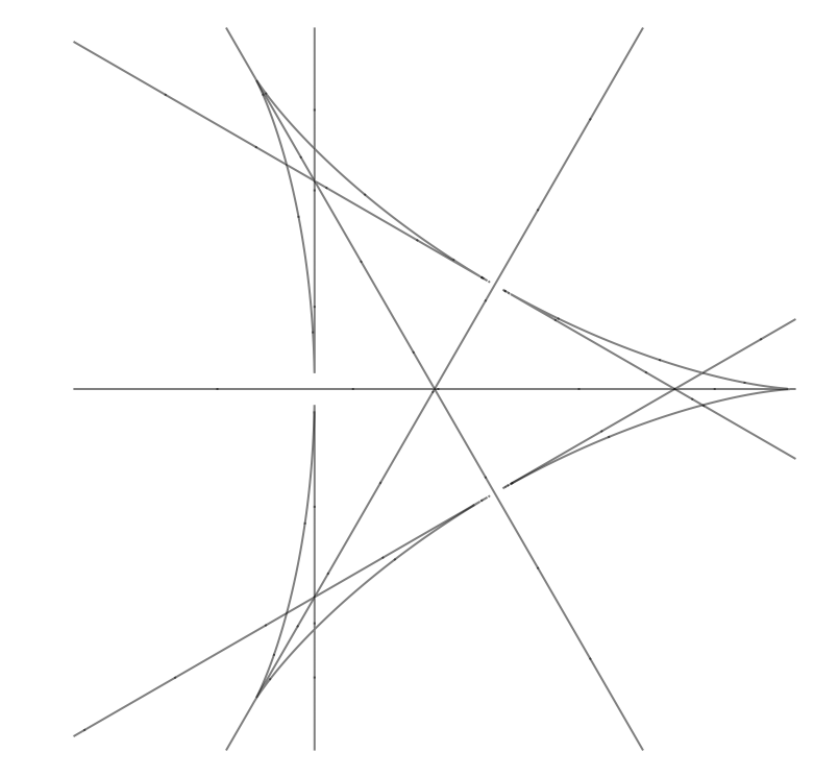}
\caption{\label{label}Maximising  decic $M_{10}$}
\end{figure}
\par
      
      \bigskip\par
Related to the polynomials discussed in Section 2 is the decic curve (see Fig.1 and also Fig.6 when $\mu=k\pi/36, k=30$) 
$$M_{10} := P_{6}(x, y, 5\pi/6) \cdot \delta$$ 
which has $3 E_{7}  \oplus 3 D_{6} \oplus 4 D_{4} \oplus 6A_{1}$ hence it is maximising with $\tau(M_{10})=61$ \cite{esc14}.
\par
  \begin{figure}[h]
 \includegraphics[width=13pc]{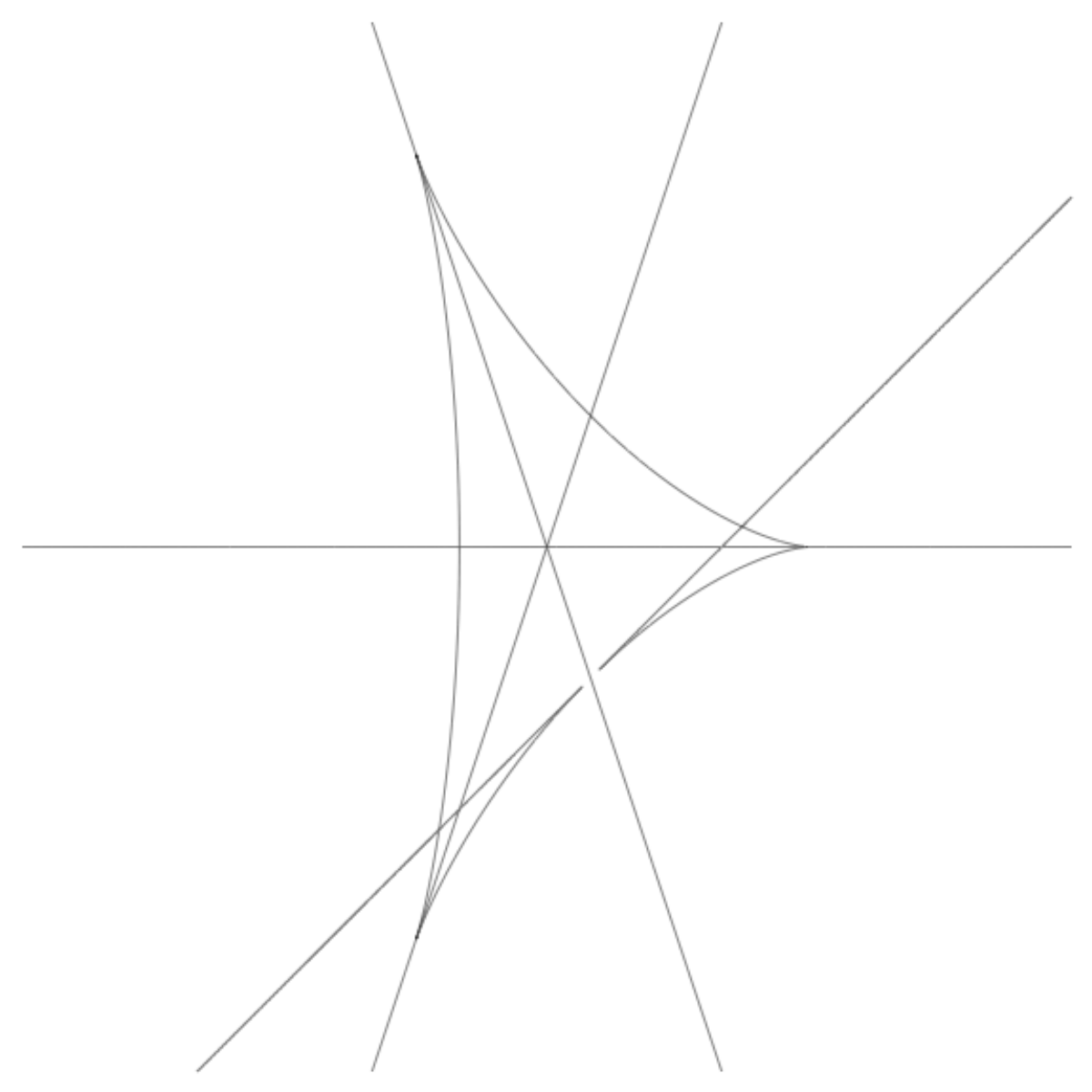}
\caption{\label{label} Maximising octic over the rationals ${\widetilde{M}}_{8}$}
\end{figure}
\par
A maximising octic  $M_{8}$ can be constructed if we subtract from the decic $M_{10}$ two of the three lines tangent to the deltoid ordinary points. The curve is not defined over the rationals, but we can get one defined over the rationals which is given by ${\widetilde{M}}_{8}(x,y):= M_{8}(x,\frac{y}{\sqrt{3}})$ (Fig.2) and has equation $${\widetilde{M}}_{8}(x,y) = (-18 x^2 y + 9 x^3 y - 9 x^2 y^2 + 2 y^3 - x y^3 + y^4) (-27+18 x^2-8 x^3+x^4+6 y^2+8 x y^2+(2 x^2 y^2)/3 + (y^4)/9)$$ 

We can check that  $\tau({\widetilde{M}}_{8})=37$  with the Singular  \cite{gre08} code (short input is used, e.g. $x^4+3x^2$ is denoted $x4+3x2$)
      \bigskip\par
LIB "sing.lib";

ring R= 0, (x,y,z), dp ; 

poly f= (-18x2yz+9x3y-9x2y2+2y3z-xy3+y4)(-27z4+18x2z2-8x3z+x4+6y2z2+8xy2z+(2x2y2)/3 + (y4)/9);

ideal sl= jacob(f);  // the singular locus of f

ideal newsl = groebner(sl); //a groebner basis

mult(newsl); //total tjurina number
  
      \bigskip\par
Another maximising octic $T_{8}$ formed by the union of a deltoid, one bitangent line and three cuspidal tangents was presented in \cite{dim23}. It has $3 E_{7}  \oplus D_{4}  \oplus 2 A_{3} \oplus 6A_{1}$  whereas ${\widetilde{M}}_{8}$ has $3 E_{7}  \oplus D_{6} \oplus D_{4} \oplus 6A_{1}$. Other maximising octics were constructed in  \cite{dim24a} but they also have different types of singularities distribution.
 
 The relation between maximising curves and free curves is characterised by the following 
     
 \begin{thm}  \cite{dim23}
 A curve $C : F= 0$  of degree $d$ having only $ADE$-singularities is maximising if and only if either 
 $d=2m$ and $C$ is a free curve with the exponents $(m-1,m)$ or $d=2m+1$ and $C$ is a free curve with exponents   $(m-1,m+1)$.
 \end{thm}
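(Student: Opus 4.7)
The plan is to leverage the numerical characterization of free curves through the total Tjurina number $\tau(C)$ and the minimal Jacobian syzygy degree $r=\mathrm{mdr}(F)$. For any reduced plane curve $C:F=0$ of degree $d$ that is free with exponents $(d_{1},d_{2})$, $d_{1}\le d_{2}$, one has $d_{1}=r$, $d_{1}+d_{2}=d-1$, and
\[
\tau(C) \;=\; (d-1)^{2}-d_{1}d_{2},
\]
which I would borrow from the literature on free curves (these are essentially the Herzog--K\"uhl relations for the minimal free resolution of the Jacobian ideal).

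The ``if'' direction is then a one-line substitution. If $d=2m$ and $(d_{1},d_{2})=(m-1,m)$, then $\tau(C)=(2m-1)^{2}-m(m-1)=3m(m-1)+1$; if $d=2m+1$ and $(d_{1},d_{2})=(m-1,m+1)$, then $\tau(C)=(2m)^{2}-(m-1)(m+1)=3m^{2}+1$. Both agree with Definition 3.2 exactly.

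For the ``only if'' direction I would invoke the du~Plessis--Wall inequality in its sharp form: for any reduced plane curve $C:F=0$ with $r=\mathrm{mdr}(F)\le(d-1)/2$,
\[
\tau(C)\;\le\;(d-1)^{2}-r(d-1-r),
\]
with equality iff $C$ is free, in which case the exponents are $(r,d-1-r)$. Assume $C$ is maximising of degree $d=2m$. Substituting $\tau(C)=3m(m-1)+1$ rearranges to $r(2m-1-r)\le m(m-1)$ with $r\in\{0,1,\dots,m-1\}$; this by itself is satisfied for every admissible $r$, so to pin down $r=m-1$ one needs the complementary Dimca--Sticlaru-type lower bound on $\tau$ in terms of $r$, or equivalently the classification of non-free curves by defect (nearly free, plus-one generated), which forces any reduced ADE curve attaining the Hirzebruch cap to achieve equality in du~Plessis--Wall. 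This identifies $r=m-1$ and yields freeness with exponents $(m-1,m)$. The case $d=2m+1$ is parallel and produces $r=m-1$ and exponents $(m-1,m+1)$.

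The hard part, and where I expect the proof to be genuinely delicate, is precisely this uniqueness argument for $r$: the du~Plessis--Wall upper bound alone permits a range of $r$-values below $(d-1)/2$, and the regime $r>(d-1)/2$ must be excluded by a separate estimate in which that bound degenerates. I would handle both by combining the global Hirzebruch cap $\tfrac{3}{4}d^{2}-\tfrac{3}{2}d+1$ (even $d$) with the odd-degree analogue $3m^{2}+1$ from \cite{dim23}, turning the argument into a short numerical comparison over the few admissible values of $r$ and forcing equality throughout once $\tau$ is maximal.
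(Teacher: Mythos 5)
The paper does not actually prove this statement: it is quoted from \cite{dim23} (Dimca--Pokora) as a known result, so there is no internal proof to compare yours against; your proposal has to be judged on its own terms. Your ``if'' direction is complete and correct: for a free curve the exponents satisfy $d_{1}+d_{2}=d-1$ and $\tau(C)=(d-1)^{2}-d_{1}d_{2}=d_{1}^{2}+d_{1}d_{2}+d_{2}^{2}$, and substituting $(m-1,m)$ and $(m-1,m+1)$ gives $3m(m-1)+1$ and $3m^{2}+1$ respectively, matching Definition 3.2.

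The ``only if'' direction has a genuine gap which your text half-acknowledges but then claims can be closed by ``a short numerical comparison,'' and that claim is false. The du~Plessis--Wall bound in the range $r\ge d/2$ does correctly exclude $r\ge m$ for $d=2m$ (at $r=m$ it gives $\tau\le 3m(m-1)$), but for $r\le m-1$ the upper bound $(d-1)^{2}-r(d-1-r)$ only grows as $r$ decreases, so every $r<m-1$ remains numerically admissible --- with strict inequality, i.e.\ with $C$ not free. The complementary du~Plessis--Wall lower bound $\tau(C)\ge(d-1)(d-1-r)$ does not rescue this: already for $d=8$, $\tau=37$, $r=2$ it reads $35\le 37$ and is satisfied, so $r=2$ is not excluded. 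The Hirzebruch cap is of no further help either, since for even degree it coincides with the maximising value and therefore carries no information beyond the hypothesis $\tau(C)=3m(m-1)+1$. What is actually required --- and what \cite{dim23} supplies --- is a structural input specific to curves with only weighted-homogeneous (here ADE) singularities that forces equality in the du~Plessis--Wall upper bound, hence freeness, once $\tau$ attains the cap; naming this only as ``the classification of non-free curves by defect'' or a ``Dimca--Sticlaru-type lower bound,'' without stating or applying the relevant inequality, leaves the central implication unproved. The odd-degree case inherits the same gap.
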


 As a result of Theorem 3.3 we see that the exponents of the free curves ${\widetilde{M}}_{8}$ and  $M_{10}$ are $(3,4)$ and $(4,5)$ respectively. A maximising nonic (therefore a nonic free curve with exponents (3,5)) has been obtained in \cite{dim24a}, p.23, but maximising curves of odd degree are exceptional \cite{jan25}. In order to study the cases of curves which are free but not maximising other methods must be used \cite{dim17}. Let $R={\Bbb{C}}[x_{0}, x_{1},...,x_{n}]$ be the graded ring of polynomials in $x_{0}, x_{1},...,x_{n}$ with complex coefficients. The vector space of degree $k$ homogeneous polynomials in $R$ is denoted by $R_{k}$. For any $f\in R_{k}$ the graded Milnor or Jacobian algebra is denoted by  $M(f)$.
For a degree $d$ hypersurface $S: f=0$ with isolated singularities in ${\Bbb{P}}^{n}({\Bbb{C}})$  the following integers are defined: 
\par\noindent
1) Coincidence threshold: $ct(S):= {\rm max}\{q: {\rm dim} M(f)_{k}={\rm dim} M(f_{s})_{k}, \forall k\le q\}$, with $f_{s}$ a polynomial in $R_{d}$, such that $S_{s}: f_{s}=0$ is a smooth hypersurface in ${\Bbb{P}}^{n}({\Bbb{C}})$. 
\par\noindent
2) Stability threshold: $st(S):= {\rm min}\{q: {\rm dim} M(f)_{k}=\tau(S), \forall k\ge q\}$. 
\par
Coincidence and Stability thresholds can be used to check if a given curve is free.

 \begin{thm}  \cite{dim17}
 (i) For a reduced free plane curve $C : F= 0$ of degree $d$ one has $ct(F)+st(F)=T$ where $T=3(d-2)$.
 
 (ii) Conversely, suppose that the reduced plane curve $C : F= 0$ of degree $d$ satisfies $ct(F)+st(F) \le T+1$. Then $C $ is free.
 \end{thm}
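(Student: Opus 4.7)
Both parts rest on the graded Hilbert--Poincar\'{e} series of the Jacobian algebra $M(F)=R/J_F$ and its comparison with the smooth benchmark. The plan begins by recording this benchmark: when $F_{s}$ is smooth of degree $d$, the partials $\partial_{0}F_{s},\partial_{1}F_{s},\partial_{2}F_{s}$ form a regular sequence in $R$, so $M(F_{s})$ is an Artinian complete intersection with Hilbert series $\bigl((1-t^{d-1})/(1-t)\bigr)^{3}$; in particular $\phi(k):=\dim M(F_{s})_{k}$ is symmetric about $T/2$, supported in $[0,T]$, with socle in degree $T=3(d-2)$.

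For (i), I would exploit freeness through a minimal graded free resolution of $J_{F}$. If $C$ is free with exponents $(d_{1},d_{2})$, $d_{1}\le d_{2}$ and $d_{1}+d_{2}=d-1$, then the Jacobian syzygy module is free of rank two with generators in degrees $d_{1},d_{2}$, giving
\[
0 \longrightarrow R(-(d_{1}+d-1))\oplus R(-(d_{2}+d-1)) \longrightarrow R(-(d-1))^{3} \longrightarrow J_{F} \longrightarrow 0,
\]
and hence
\[
HP(M(F);t)\;=\;\frac{1-3t^{d-1}+t^{d_{1}+d-1}+t^{d_{2}+d-1}}{(1-t)^{3}}.
\]
The numerator of $HP(M(F_{s});t)-HP(M(F);t)$ has lowest-degree term $-t^{d_{1}+d-1}$, so $\dim M(F)_{k}=\phi(k)$ precisely for $k\le d_{1}+d-2$; therefore $ct(F)=d_{1}+d-2$. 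A direct binomial calculation on the same formula shows that the four contributions collapse to the constant $(d-1)^{2}-d_{1}d_{2}=\tau(C)$ exactly once all four binomials are in the polynomial regime, namely for $k\ge d_{2}+d-3$, whence $st(F)=d_{2}+d-3$. Adding, $ct(F)+st(F)=(d_{1}+d_{2})+(2d-5)=3(d-2)=T$.

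For (ii), I would upgrade this to a characterization by showing that any reduced non-free plane curve already satisfies $ct(F)+st(F)\ge T+2$, so that the hypothesis $ct(F)+st(F)\le T+1$ forces freeness. The natural tool is the defect module $N(F):=I_{F}/J_{F}$, where $I_{F}$ is the saturation of $J_{F}$ in the irrelevant ideal; a standard fact is that $C$ is free iff $N(F)=0$. A local-cohomology/Serre-duality argument supplies a perfect pairing $N(F)_{k}\times N(F)_{T-k}\to{\Bbb{C}}$, so $\dim N(F)_{k}$ is symmetric about $T/2$. The non-vanishing of $N(F)$ then forces a deficit of $\dim M(F)$ against $\phi(k)$ at some degree $k^{\ast}\le T/2$ and, via the symmetry, a simultaneous excess over $\tau(C)$ at $T-k^{\ast}$; together these push $ct(F)\le k^{\ast}-1$ and $st(F)\ge T-k^{\ast}+1$. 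The hard part will be exactly this duality step: formalising the self-duality of $N(F)$ and converting its mere non-vanishing into the required two-unit strict excess of $ct(F)+st(F)$ over $T$. Part (i), by contrast, is transparent bookkeeping once the free resolution above is in hand.
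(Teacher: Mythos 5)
This statement is quoted in the paper with a citation to \cite{dim17} and is given no proof there, so there is nothing internal to compare against; I am judging your proposal on its own terms and against the argument in the cited source. Your part (i) is correct and essentially complete: the length-one free resolution of ${\rm D}_0(F)$, the resulting Hilbert series $\bigl(1-3t^{d-1}+t^{d_1+d-1}+t^{d_2+d-1}\bigr)/(1-t)^3$, the identification $ct(F)=d+d_1-2$ from the lowest-degree term of the difference with the smooth benchmark, and $st(F)=d+d_2-3$ from the point where all four binomials enter the polynomial regime, all check out (using $d_1+d_2=d-1$ and $\tau(C)=(d-1)^2-d_1d_2$), and this is exactly how the result is proved in \cite{dim17}.

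Part (ii), however, has a genuine gap, and not only because you defer the duality step. First, the mechanism you describe cannot occur: by semicontinuity of $\dim (J_f)_k$ in the coefficients of $f$ (it is the rank of a linear map depending on $f$, maximal for $f$ smooth), one has $\dim M(F)_k \ge \dim M(F_s)_k=\phi(k)$ for \emph{every} $k$, so there is never a ``deficit of $\dim M(F)$ against $\phi(k)$''; the coincidence threshold is lost through an excess, which (by the identity $\dim M(F)_k-\phi(k)={\rm def}_{T-k}(\Sigma)+\dim N(F)_k$ of Dimca--Sticlaru) is fed both by $N(F)$ and by the defect of the singular subscheme, a term you never account for. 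Second, and more seriously, the two bounds you extract, $ct(F)\le k^{\ast}-1$ and $st(F)\ge T-k^{\ast}+1$, point in opposite directions: an upper bound on $ct(F)$ together with a lower bound on $st(F)$ gives at best $ct(F)+st(F)\ge ct(F)+T-k^{\ast}+1$, which yields $\ge T+2$ only if $ct(F)\ge k^{\ast}+1$ --- contradicting your own bound $ct(F)\le k^{\ast}-1$. Even with $k^{\ast}=\sigma(C)=\min\{k:N(F)_k\neq 0\}$ and equality $ct(F)=k^{\ast}-1$, you would only reach $ct(F)+st(F)\ge T$, which is what free curves already satisfy and does not separate them from non-free ones. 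Closing the argument requires the finer quantitative input of \cite{dim17}: the exact expression of both thresholds in terms of $\sigma(C)$, $mdr(F)$ and the defect, combined with the du Plessis--Wall type inequality, to show that $N(F)\neq 0$ forces the sum to jump by at least two units above $T$. As written, the converse direction is not established.
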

 
   \begin{figure}[h]
 \includegraphics[width=13pc]{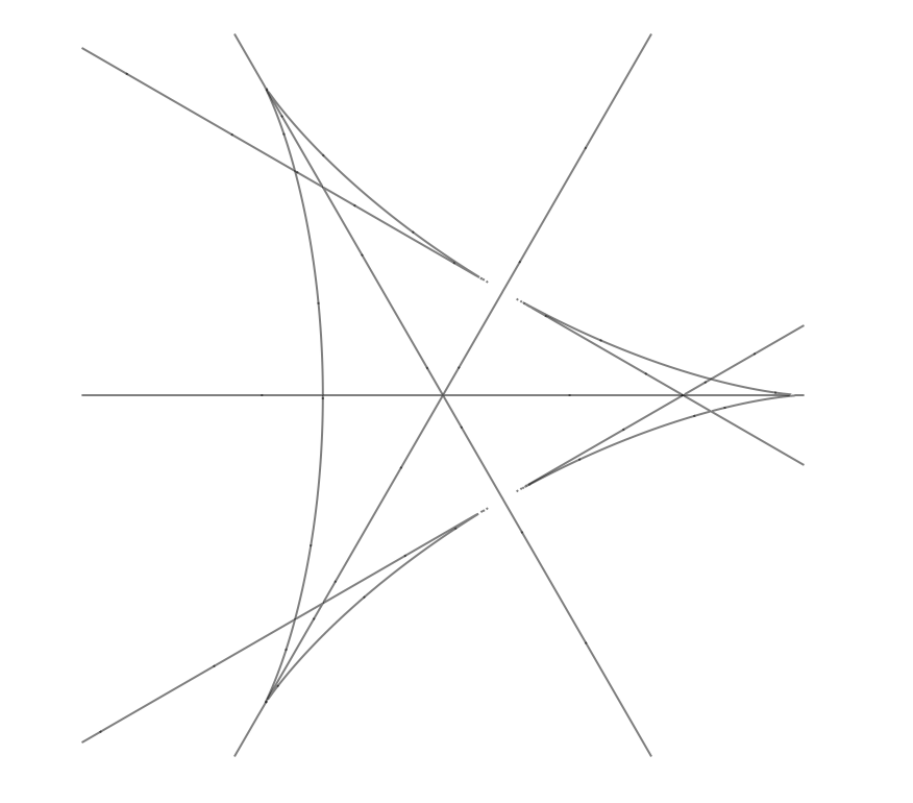}
\caption{\label{label} A free nonic curve $C_{9}$ which is not maximising}
\end{figure}
\par

 If we analyse the singularities of the nonic curve $C_{9} 
 :=(-48 x^2 y+48 x^3 y-12 x^4 y+16 y^3-16 x y^3+40 x^2 y^3-12 y^5)
(-27+18 x^2-8 x^3+x^4+18 y^2+24 x y^2+2 x^2 y^2+y^4)$,
 obtained by subtracting from $M_{10}$  the line $x+1=0$, which is tangent to one of the deltoid ordinary points (Fig.3) we see that  $\tau(C_{9}) = 48 = 3m^2$ therefore it is not maximising. However $C_{9}$ is free  because the computation with Singular shows that  $ ct (C_{9})= 11, st (C_{9})= 10$ and $T=3(d-2)=21=ct (C_{9})+st (C_{9})$. 

\bigskip\par
A family of curves which are significant in the construction of complex ball-quotient surfaces are the Miyaoka-Kobayashi  (MK for short) curves which are certain complex projective plane curves of even degree higher than four, defined and studied by Hirzebruch and Ivinskis. If $Sing(C)$ denotes the set of singularities of the curve $C$ then the MK number is $m(C) :=  \sum_{p \in Sing(C)}m(p)$, where $m(p)$ is the MK number of the singularity $p$  (see Definition 2.1 in \cite{dim24b}). In the case of ADE singularities there are explicit expressions for  $m(p)$ given in Lemma 3.2 in \cite{dim24b}. A MK curve $C$ of even degree $d > 5$ is characterised by $m(C) = \frac{d}{2}(5d -6)$ (Definition 2.4 in \cite{dim24b}).

For degrees higher than seven no example of MK curve is known \cite{dim24b} and the curves presented in this paper are not MK curves. We then consider the problem of finding curves with the maximum $m$.  According to Lemma 3.2 in \cite{dim24b} $m(w)=3(\tau(w)+\epsilon(w))$ for any isolated hypersurface singularity $w$, where 
$\epsilon(A_{k})=\frac{k}{k+1} (k \ge 1), \epsilon(D_{k})=\frac{4k-9}{4k-8} (k \ge 4), \epsilon(E_{6})=\frac{23}{24}, \epsilon(E_{7})=\frac{47}{48}, \epsilon(E_{8})=\frac{119}{120}$ and $\tau(w)$ is the Tjurina number of the singularity. Both maximising octics  ${\widetilde{M}}_{8}$ and $T_{8}$ are not MK curves but $T_{8}$ has an $m(C)/3$ closer to the value  $\tau(C)+\epsilon(C)=\frac{d}{6}(5d -6)$. In fact $m({\widetilde{M}}_{8})<m(T_{8})=135.9375<\frac{d}{2}(5d -6)=136$. The maximising decic $M_{10}$ is not a MK curve but   $m(M_{10})=3(\tau(C)+\epsilon(C))=219.75$ is close to $\frac{d}{2}(5d -6)=220$ and higher than other maximising decics.
\bigskip\par
     
\section{The critical points of $P_{d}(x,y,\mu)$}
\bigskip\par
In Section 5 we will study Hamiltonian systems related to the polynomials $P_{d}(x,y,\mu)$. It will then be necessary to use certain characteristics of the critical points of $P_{d}(x,y,\mu)$.
 \begin{thm} \cite{esc17}
The critical points and critical values of $P_{d}(x,y,\mu)$ have the following properties:
\bigskip\par\noindent
(p1) The critical values are 
$$\zeta=0, \zeta_{M}(\mu)=6{\rm cos}\mu+2{\rm cos}3\mu, \zeta_{m_{1}}(\mu)=\zeta_{M}(\mu-\frac{2\pi }{3}), \zeta_{m_{2}}(\mu)=\zeta_{M}(\mu+\frac{2\pi }{3}).$$ 
\par\noindent
(p2) The critical points with critical values $\zeta_{M}(\mu),\zeta_{m_{1}}(\mu)$ and $\zeta_{m_{2}}(\mu)$ have the same coordinates $\forall \mu \in  {\Bbb{R}}$. 
\par\noindent
(p3) At $\mu \in \Lambda_{\Sigma}:=\{k \frac{\pi}{3}, k \in  {\Bbb{Z}}, 0 \le k \le 5\}$, either $P_{d}(x,y,\mu)$ or $-P_{d}(x,y,\mu)$ have all the maxima with critical value 8 and all the minima with value -1.
\par\noindent
(p4) All the maxima have values $3 \sqrt{3}$ and all the minima $-3 \sqrt{3}$ at $\mu \in \Lambda_{S}:=\{(2k+1) \frac{\pi}{6}, k \in  {\Bbb{Z}}, 0 \le k \le 5 \}$.
\par\noindent
(p5) For each $\mu \in \Lambda_{3}:=[0,2\pi) \setminus \{ k \frac{\pi}{6}, k \in  {\Bbb{Z}}, 0 \le k \le 11 \}$, $P_{d}(x,y,\mu)$ has critical values $\zeta=0$ and $\zeta_{m}, m=1,2,3$, with $0<\left | \zeta_{3} \right | <1< \left | \zeta_{2} \right | <3 \sqrt{3}<\left | \zeta_{1} \right |<8$ and $sgn({\zeta_{1}}) \ne sgn({\zeta_{2}})=sgn({\zeta_{3}})$.

\end{thm}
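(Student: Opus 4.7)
The plan is to reduce the critical analysis to a trigonometric parameterization of $P_d$ on $\operatorname{Int}\delta$ and then extract (p1)--(p5) from elementary cosine identities. Identify $\mathbb{R}^2$ with $\mathbb{C}$, write $\alpha_1 = 1$, $\alpha_2 = \omega$, $\alpha_3 = \bar\omega$, and set $a_j(z) := \alpha_j\cdot z$, so that $a_1 + a_2 + a_3 = 0$. I would first establish the identity
\[
P_d\bigl(\operatorname{Re} Z(z),\operatorname{Im} Z(z),\mu\bigr) \;=\; 2\sum_{j=1}^{3}\cos(d a_j - \mu) + 2\cos 3\mu
\]
on $\operatorname{Int}\delta$. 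Each $e^{\pm i(d a_j - \mu)}$ is a $\Delta_{\mathbb{R}^2}$-eigenfunction with eigenvalue $-d^2$, so the sum of cosines is annihilated by $\Delta_{\mathbb{R}^2} + d^2$ in $z$; by the intertwining relation (2.9) it descends to an eigenfunction of $\mathcal{L}_2 + d^2$ in the $Z$-coordinates, which after matching the degree-$d$ leading term and the factorisation into the lines $L_{d,\mu,s}$ coincides with $J_d = P_d - 2\cos 3\mu$ by (2.3).

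Given the identity, $\nabla_z P_d = -2d\sum_j \sin\theta_j\,\alpha_j$ where $\theta_j := d a_j - \mu$. Because $\alpha_1 + \alpha_2 + \alpha_3 = 0$ and the three unit vectors lie at mutual angles $2\pi/3$, a real combination $\sum c_j\alpha_j$ vanishes iff $c_1 = c_2 = c_3$, so the critical condition reduces to $\sin\theta_1 = \sin\theta_2 = \sin\theta_3$. For each pair this gives either (I) $\theta_i\equiv\theta_j\pmod{2\pi}$ or (II) $\theta_i+\theta_j\equiv\pi\pmod{2\pi}$, and together with $\sum\theta_j = -3\mu$ the enumeration yields three classes. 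In \emph{Type A} (all pairs (I)) the $\theta_j$ are congruent to a common $\theta_*$ with $3\theta_*\equiv -3\mu\pmod{2\pi}$, so $\theta_* \in \{-\mu,\,-\mu + 2\pi/3,\,-\mu - 2\pi/3\}$ and the critical value is $6\cos\theta_* + 2\cos 3\mu$, which is $\zeta_M(\mu)$, $\zeta_{m_1}(\mu)$, or $\zeta_{m_2}(\mu)$ respectively. In \emph{Type B} (all pairs (II)) summing the three pair equations gives $-6\mu\equiv 3\pi\pmod{2\pi}$, forcing $\mu\in\Lambda_S$; the constraints then force $\theta_j\in \pi/2 + \pi\mathbb{Z}$, so $\sum\cos\theta_j = 0$ and $P_d = 2\cos 3\mu = 0$. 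In \emph{Type C} (one pair (I), two pairs (II), valid for all $\mu$) one has $\cos\theta_2 = \cos\theta_1$, $\cos\theta_3 = -\cos\theta_1$, and the sum relation gives $\theta_1\equiv -\pi - 3\mu\pmod{2\pi}$, so $\cos\theta_1 = -\cos 3\mu$ and $P_d = 0$. This proves (p1), and (p2) is immediate because the Type A coordinates are determined by discrete pair-indices $(k_{12},k_{13})\in(\mathbb{Z}/d\mathbb{Z})^2$ via $3a_1 = 2\pi(k_{12}+k_{13})/d$, while the assignment to $\zeta_M,\zeta_{m_1},\zeta_{m_2}$ depends only on $(k_{12}+k_{13})\bmod 3$, independently of $\mu$.

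For (p3) and (p4), substituting $\mu = k\pi/3$ in $\zeta_M$ and its translates yields the triple $\{\pm 8,\mp 1,\mp 1\}$ (with correlated signs determined by the parity of $k$), while $\mu = (2k+1)\pi/6$ yields $\{3\sqrt 3,\,0,\,-3\sqrt 3\}$; a Hessian check at a single Type A point in each class (using the identity) distinguishes maxima from minima of $\pm P_d$. For (p5) I would analyse $g(\mu) := 6\cos\mu + 2\cos 3\mu$ via the factorisation $g'(\mu) = -12\sin 2\mu\cos\mu$, whose zeros in $[0,2\pi)$ are $\{0,\pi/2,\pi,3\pi/2\}$ with $g$-values $\{8,0,-8,0\}$, together with the threshold values $g(\pi/6) = 3\sqrt 3$ and $g(\pi/3) = 1$. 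On the fundamental sub-interval $\mu\in(0,\pi/6)$, strict monotonicity of $g$ and its shifts places $\zeta_M(\mu)\in(3\sqrt 3,\,8)$, $\zeta_{m_2}(\mu) = g(\mu + 2\pi/3)\in(-3\sqrt 3,\,-1)$, and $\zeta_{m_1}(\mu) = g(\mu - 2\pi/3)\in(-1,\,0)$; relabelling by decreasing modulus then gives $0 < |\zeta_3| < 1 < |\zeta_2| < 3\sqrt 3 < |\zeta_1| < 8$ with sign pattern $(+,-,-)$, hence $\operatorname{sgn}\zeta_1\neq\operatorname{sgn}\zeta_2 = \operatorname{sgn}\zeta_3$. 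The remaining eleven sub-intervals of $\Lambda_3$ follow from the symmetries $\mu\mapsto\mu + \pi/3$ and $\mu\mapsto -\mu$, which cyclically permute the three branches and reverse signs. The main obstacle is the case bookkeeping for (p5): verifying the strict inequalities requires tracking three shifted trigonometric curves against four threshold levels $\pm 1,\pm 3\sqrt 3$ and confirming no interior crossings, which is elementary once $g'$ has been factored but demands patient sub-interval analysis.
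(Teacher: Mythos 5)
This theorem is quoted from \cite{esc17} and the paper records only its output (the critical points in Eq.~(4.1) and the $(u_{c},v_{c})$ lists (a)--(d)); your reconstruction follows exactly the intended route — pull back by $Z$, write $P_{d}$ as $2\sum_{j}\cos\theta_{j}+2\cos 3\mu$, and solve $\sin\theta_{1}=\sin\theta_{2}=\sin\theta_{3}$ — and the case analysis, the lattice argument for (p2), and the monotonicity bookkeeping for (p5) are all sound. Two small points to fix: your identity is off by a constant phase of $2\pi/3$ in each $\theta_{j}$ (your Type A class with $\theta_{*}\equiv-\mu$ corresponds to the paper's case (c) with value $\zeta_{m_{2}}$, not to case (a) with $\zeta_{M}$ — harmless for the set of critical values but it swaps the labels), and the chain-rule equivalence between critical points of $P_{d}$ and of $P_{d}\circ Z$ needs a separate remark where $DZ$ is singular, i.e.\ for the $\zeta=0$ critical points landing on $\delta$ when $\mu\in\Lambda_{S}$.
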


The critical points of $P_{d}(x,y,\mu)$
are 
\begin{equation}
(x_{c}, y_{c})=({\rm cos} (2\pi (u_{c}+v_{c}))+{\rm cos} (2\pi u_{c})+{\rm cos} (2\pi v_{c}),{\rm sin} (2\pi (u_{c}+v_{c}))-{\rm sin} (2\pi u_{c})-{\rm sin} (2\pi v_{c}))
\end{equation}
 and a direct computation of the critical values leads to the following expressions for $u_{c},v_{c}$ ($k,l \in {\Bbb{Z}}$) \cite{esc17}:
  \bigskip\par 
     \par\noindent
 (a) $\zeta_{M}(\mu)=6{\rm cos}\mu+2{\rm cos}3\mu$; $u_{c}=\frac{3k+1}{3d},v_{c}=\frac{3l+1}{3d}$.
  \bigskip\par
     \par\noindent
 (b) $\zeta_{m_{1}}(\mu)=6{\rm cos}(\mu-\frac{2\pi }{3})+2{\rm cos}3\mu$; $u_{c}=\frac{3k+2}{3d},v_{c}=\frac{3l+2}{3d}$.
    \bigskip\par 
   \par\noindent
 (c) $\zeta_{m_{2}}(\mu)=6{\rm cos}(\mu+\frac{2\pi }{3})+2{\rm cos}3\mu$; $u_{c}=\frac{k}{d},v_{c}=\frac{l}{d}$.
     \bigskip\par 
   \par\noindent
 (d) $\zeta=0$; 
   \par
  (d1) $u_{c}=\frac{6k-1}{6d}-\frac{\mu}{\pi d}, v_{c}=\frac{6l-1}{6d}-\frac{\mu}{\pi d}$;
     \par
  (d2) $u_{c}=\frac{6k-1}{6d}-\frac{\mu}{\pi d}, v_{c}=\frac{3l+1}{3d}+\frac{2\mu}{\pi d}$;
       \par
  (d3) $u_{c}=\frac{3k+1}{3d}+\frac{2\mu}{\pi d}, v_{c}=\frac{6l-1}{6d}-\frac{\mu}{\pi d}$.
\bigskip\par
  The points $(u_{c},v_{c})$ in the $(u,v)$-plane are situated in the fundamental region, denoted by $W$, of the affine Weyl group of $\bold{A}_{2}$. They satisfy  $\max \{-2v,v\}<u<\frac{1-v}{2}$  \cite{esc16} when $\mu \notin \Lambda_{S}$ whereas when $\mu \in \Lambda_{S}$ (case (p4) in Theorem 4.1) some points are in the border of $W$. This allows to compute the possible values of $k,l$ and hence the number of critical points of $P_{d}(x,y,\mu)$, which are situated  inside ${\rm Int} \delta$ in the $(x,y)$-plane, except when $\mu \in \Lambda_{S}$  where some critical points with $\zeta=0$ lie on $\delta$ (\cite{esc17}, Figs.2,3).
\par
\section{Bifurcations on the polynomial Hamiltonian systems associated with $P_{d}(x,y,\mu)$}

\bigskip\par
We now study one-parameter families of autonomous ordinary differential equations  $\dot {\bf r}=f({\bf r},\mu)$, which generate flows, where ${\bf r}(t)=(x(t),y(t)) \in \Bbb{R}^2$ are the dependent variables, $t \in \Bbb{R}$ is the independent variable and $\mu \in \Bbb{R}$ is the parameter. The simplest solutions are constants corresponding to $f({\bf r},\mu)=0$, and are called stationary points or critical points. Bifurcations or catastrophes deal with the change in the number of solutions of $f({\bf r},\mu)=0$ as the parameter $\mu$ varies, and then can be treated as part of singularity theory \cite{arn72, arn85}. They were motivated to some extent by some questions in biology \cite{toh76} and have been applied to a variety of problems in science and engineering  \cite{gil81}.  
\par
The defining differential equations we consider can be derived from a polynomial Hamiltonian $H(x,y)$ for which we have the following result (\cite{lay15}, p.298): 
 \begin{thm}
Any nondegenerate critical point of an analytic Hamiltonian system 
\begin{equation}
\dot x=\partial_{y}H(x,y), \dot y=-\partial_{x}H(x,y)
\end{equation}
is either a saddle or a center; again $(x_{0},y_{0})$ is a saddle for Eq.(5.1) if and only if it is a saddle of the Hamiltonian function $H(x,y)$ and a strict local maximum or minimum of the function $H(x,y)$ is a center for Eq.(5.1).
\end{thm}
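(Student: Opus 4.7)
The plan is to reduce the theorem to a short linear-algebra computation at the critical point, and then to handle the center case separately using the fact that $H$ is a first integral of Eq.~(5.1).

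First I would write down the Jacobian matrix $J$ of the right-hand side of Eq.~(5.1) at the nondegenerate critical point $(x_{0},y_{0})$. A direct computation gives
\begin{equation*}
J=\begin{pmatrix} H_{xy}(x_{0},y_{0}) & H_{yy}(x_{0},y_{0}) \\ -H_{xx}(x_{0},y_{0}) & -H_{xy}(x_{0},y_{0}) \end{pmatrix},
\end{equation*}
so $\mathrm{tr}\,J=0$ and $\det J=\det\mathrm{Hess}\,H(x_{0},y_{0})$. The eigenvalues are therefore $\pm\sqrt{-\det\mathrm{Hess}\,H(x_{0},y_{0})}$, and since nondegeneracy of the critical point of $H$ forces $\det\mathrm{Hess}\,H\neq 0$, they form either a real pair of opposite sign or a nonzero purely imaginary pair. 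This already shows that a nondegenerate critical point can be only of saddle or center type.

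Next I would split into the two possibilities. If $(x_{0},y_{0})$ is a saddle of $H$, then $\det\mathrm{Hess}\,H<0$ and $J$ has two real eigenvalues of opposite sign, so the linearization is hyperbolic. By the Hartman--Grobman theorem the nonlinear flow near $(x_{0},y_{0})$ is topologically conjugate to its linearization, giving a saddle for Eq.~(5.1). Conversely, if the fixed point is a saddle for the flow, then $J$ must have a real eigenvalue of each sign, forcing $\det J<0$ and hence $(x_{0},y_{0})$ to be a saddle of $H$; this handles the ``if and only if''.

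For the case of a strict local extremum I would invoke the Hamiltonian structure directly, since purely imaginary eigenvalues alone are inconclusive in general. Along every trajectory one has $\dot H=H_{x}\dot x+H_{y}\dot y=H_{x}H_{y}-H_{y}H_{x}=0$, so orbits are confined to level sets of $H$. Near a strict extremum the Morse lemma provides coordinates in which $H-H(x_{0},y_{0})$ is a definite quadratic form, whose level sets form a nested family of Jordan curves shrinking to $(x_{0},y_{0})$. Each nonconstant orbit in a small enough neighborhood lies on such a curve, and nondegeneracy prevents any other equilibria on it, so the orbit must traverse the whole Jordan curve periodically; this is exactly the center behavior. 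The main obstacle in the whole argument is this center case: linearization cannot see it, and the essential trick is to replace the linear analysis by the conservation law $H\equiv\mathrm{const}$ combined with Morse's lemma, while everything else reduces to inspecting the $2\times 2$ Jacobian above.
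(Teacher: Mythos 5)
The paper does not prove this statement: it is quoted verbatim from Layek (\cite{lay15}, p.~298) as a known fact, so there is no internal proof to compare against. Your argument is correct and is the standard one: the trace-free Jacobian with $\det J=\det\mathrm{Hess}\,H(x_{0},y_{0})$ settles the saddle case via Hartman--Grobman, and you rightly recognize that purely imaginary eigenvalues do not by themselves give a center, supplying the needed extra step (conservation of $H$ plus the Morse lemma, so that nearby orbits lie on nested closed level curves free of equilibria and are therefore periodic). This is a complete proof of the cited theorem.
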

We analyse codimension-one bifurcations \cite{kus95}  for a type of dynamical systems connected with  $P_{d}(x,y,\mu)$. We treat the Hamiltonian system
\begin{equation}
\dot x=\partial_{y}P_{d}(x,y,\mu), \dot y=-\partial_{x}P_{d}(x,y,\mu)
\end{equation}
\par
Although the results we obtain are valid for the system given by Eq.(5.2) one can also consider a related gradient system $\dot {\bf x}=-{\rm grad} F({\bf x})$, where ${\rm grad} F=(\partial_{x}F, \partial_{y}F)^T$ (\cite{lay15}, p.302): 
 \begin{thm}
The planar system given by $\dot x=f(x,y),  \dot y=g(x,y)$ is a Hamiltonian system, if and only if the system orthogonal to it, given by $\dot x=g(x,y),  \dot y=-f(x,y)$ is a gradient system.
\end{thm}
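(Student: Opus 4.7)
The plan is to prove both implications by identifying the Hamiltonian $H$ of the original system with the potential $F$ of the orthogonal gradient system: the equivalence in fact holds via the literal identity $H=F$, so the whole argument reduces to unwinding definitions.

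First I would write out explicitly what each side of the biconditional requires. Per the setup of Eq.~(5.1), the system $\dot x = f(x,y),\ \dot y = g(x,y)$ is Hamiltonian iff there exists a smooth $H(x,y)$ with $f = \partial_y H$ and $g = -\partial_x H$. The orthogonal system $\dot x = g,\ \dot y = -f$ is a gradient system $\dot{\mathbf x} = -\operatorname{grad} F$ iff there exists $F(x,y)$ with $g = -\partial_x F$ and $-f = -\partial_y F$, that is, $f = \partial_y F$.

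For the forward direction, assume the original system is Hamiltonian with generator $H$ and set $F := H$. Then $-\partial_x F = -\partial_x H = g$ and $-\partial_y F = -\partial_y H = -f$, which is exactly the gradient form of the orthogonal vector field $(g,-f)$. For the converse, assume the orthogonal system is gradient with potential $F$ and set $H := F$. Then $\partial_y H = \partial_y F = f$ and $-\partial_x H = -\partial_x F = g$, showing that the original system is Hamiltonian with $H = F$.

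There is no real obstacle: both notions reduce to the same integrability condition $\partial_x f + \partial_y g = 0$, equivalently the closedness of the one-form $g\,dx - f\,dy$. The one point worth noting is that this is an equivalence between existence assertions; whenever $H$ (respectively $F$) is given we simply recycle it as the other, so subtleties about the existence of a global potential on non-simply-connected domains never intervene.
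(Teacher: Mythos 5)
Your proof is correct; the paper itself states this result as a quoted theorem from Layek (\cite{lay15}, p.~302) and gives no proof, and your argument — unwinding both definitions and observing that the Hamiltonian $H$ of $(f,g)$ and the potential $F$ of the orthogonal field $(g,-f)$ satisfy literally the same pair of equations $f=\partial_y(\cdot)$, $g=-\partial_x(\cdot)$, so one may take $H=F$ — is exactly the standard one-line verification. Your closing remark that the identification of $H$ with $F$ avoids any issue about local versus global existence of a potential is a sensible touch.
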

\par
The critical or stationary points of the orthogonal gradient system 
\begin{equation}
\dot x=-\partial_{x}P_{d}(x,y,\mu), \dot y=-\partial_{y}P_{d}(x,y,\mu)
\end{equation}
are the same as those of the system given by Eq.(5.2). The centers of a planar system correspond to the nodes of its orthogonal system and the saddles (foci) of the planar system are the saddles (foci) of its orthogonal system \cite{lay15}. Also at the regular points the trajectories of these systems are orthogonal. 
\par

The bifurcation or catastrophe sets of the dynamical system defined by Eq.(5.2) are analysed in this Section by taking into account Theorem 4.1. 
\par
 \begin{thm} The polynomial Hamiltonian system 
$$
\dot x=\partial_{y}P_{d}(x,y,\mu)$$
$$ \dot y=-\partial_{x}P_{d}(x,y,\mu)
$$
has 
       \par
          \par\noindent
(a) two bifurcation points at $\mu=\frac{\pi}{6}, \frac{\pi}{2}$ for $d \neq 3n$
       \par
          \par\noindent
(b) six  bifurcation points at $\mu=(2l+1)\frac{\pi}{6}$, $l=0,1,...5$ for  $d= 3n$.
 \end{thm}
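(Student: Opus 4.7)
The plan is to reduce the bifurcation analysis to a collision analysis of critical points. By Theorem~5.1 every non-degenerate critical point of the Hamiltonian system~(5.2) is a center or a saddle, both structurally stable, so a bifurcation occurs precisely at those $\mu$ where $P_d(\cdot,\cdot,\mu)$ has a degenerate critical point. Such a degeneracy arises only when two critical points collide, so the task is to determine at which $\mu$ this happens.

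First I would identify the candidate set. By (p2) of Theorem~4.1, the critical points of types (a), (b), (c) are pinned to $\mu$-independent positions and carry critical values $\zeta_M(\mu), \zeta_{m_1}(\mu), \zeta_{m_2}(\mu)$, while those of types (d1), (d2), (d3) translate linearly in $\mu$ with critical value identically zero. A (d)-point can therefore collide with an $(\alpha)$-point ($\alpha=a,b,c$) only when $\zeta_\alpha(\mu)=0$. Using $\cos 3\mu=4\cos^3\mu-3\cos\mu$,
$$
\zeta_M(\mu) = 6\cos\mu+2\cos 3\mu = 8\cos^3\mu,
$$
which vanishes iff $\mu\in\{\pi/2,3\pi/2\}$. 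The relations $\zeta_{m_1}(\mu)=\zeta_M(\mu-2\pi/3)$ and $\zeta_{m_2}(\mu)=\zeta_M(\mu+2\pi/3)$ give the additional vanishing loci $\{\pi/6,7\pi/6\}$ and $\{5\pi/6,11\pi/6\}$. A direct substitution into the formulas in Section~4 for $u_c,v_c$ then shows that the candidate positions actually coincide at each of these $\mu$-values, and that intra-(d) collisions occur at the same set. Hence the candidate bifurcation set is $\Lambda_S = \{(2l+1)\pi/6 : l=0,1,\ldots,5\}$.

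To separate cases (a) and (b) I would analyse the action of the translation $(u,v)\mapsto(u-1/3,v-1/3)$, which corresponds via the $Z$-map of Section~2 to the rotation of $(x,y)$ by $2\pi/3$. When $d\not\equiv 0\pmod 3$, a direct reindexing of $k$ in the formulas in Section~4 shows that this translation cyclically permutes the three families of fixed critical-point locations (namely $(a)\to(c)\to(b)\to(a)$ when $d\equiv 1$, and $(a)\to(b)\to(c)\to(a)$ when $d\equiv 2$), matching exactly the cyclic permutation $\zeta_M\mapsto\zeta_{m_2}\mapsto\zeta_{m_1}\mapsto\zeta_M$ of critical values under $\mu\mapsto\mu+2\pi/3$. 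Combining these, one obtains the functional identity
$$
P_d\bigl(R_\epsilon(x,y),\mu\bigr) = P_d(x,y,\mu+2\pi/3),
$$
where $R_\epsilon$ is rotation by $\epsilon\cdot 2\pi/3$ with $\epsilon=+1$ for $d\equiv 1$ and $\epsilon=-1$ for $d\equiv 2$. This identity conjugates the Hamiltonian system at $\mu$ to the one at $\mu+2\pi/3$ via a rigid rotation, so the resulting $\Bbb{Z}/3$-action partitions $\Lambda_S$ into the two orbits $\{\pi/6,5\pi/6,3\pi/2\}$ and $\{\pi/2,7\pi/6,11\pi/6\}$, with representatives $\pi/6$ and $\pi/2$. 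When $d=3n$, the same translation $(u,v)\mapsto(u-1/3,v-1/3)$ preserves each of the three families individually, so the matching with the cyclic permutation of values fails and no such identity holds; the six elements of $\Lambda_S$ remain as genuinely distinct bifurcation parameters.

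The hard part will be establishing the functional identity rigorously. My approach is to work with the product-of-lines representation $P_d=\lambda_{d,\mu}\prod_s L_{d,\mu,s}$: under $\mu\mapsto\mu+2\pi/3$ the orientation $\varphi$ of each line shifts by $-2/(3d)$, and rotation of $(x,y)$ by $\pm 2\pi/3$ effects a further shift of each $\varphi$ by $\mp 1/3$, so a compensating reindexing $s\mapsto s+k$ returns each factor to the original family iff $3k\equiv 2\pm d\pmod{3d}$ has an integer solution; this congruence is solvable precisely when $d\not\equiv 0\pmod 3$. Tracking the scalar factor $\lambda_{d,\mu}$ through the same transformations then yields the sign and completes the proof.
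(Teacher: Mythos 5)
Your proposal is correct in substance and reaches the theorem by the same underlying mechanism as the paper -- bifurcations occur exactly when degenerate critical points of $P_d$ appear, these arise from collisions of the moving $\zeta=0$ critical points with the fixed ones, the collisions happen precisely on $\Lambda_S$ because $\zeta_M(\mu)=8\cos^3\mu$ and its two shifts vanish there, and the reduction from six to two for $d\neq 3n$ comes from the $2\pi/3$-periodicity of the family up to a rotation of the plane. Where you differ is in the organization of the last step: the paper proves the count by an explicit interval-by-interval tracking of the line arrangements ($\Sigma^d_D,\Sigma^d_C,S^d_D,S^d_C$), watching the triangular cells carrying $\zeta_{m_1}$ (resp.\ $\zeta_M$, $\zeta_{m_2}$) shrink until their three $A_1$ vertices and interior center merge into a $D_4$ point at $\mu=\pi/6$ (resp.\ $\pi/2$, $5\pi/6$), and then observing at $\mu=2\pi/3$ that the picture repeats up to rotation when $3\nmid d$ but only up to rotation-plus-reflection (orientation reversal) when $3\mid d$; you instead package that observation as a functional identity $P_d(R_\epsilon(x,y),\mu)=P_d(x,y,\mu+2\pi/3)$ verified through the product-of-lines representation and a congruence $3k\equiv 2\pm d \pmod{3d}$. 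Your route buys a cleaner, checkable criterion for why $3\mid d$ is the dividing line; the paper's route buys the finer dynamical information (which cells collapse, the $3$-saddles-plus-one-center structure of each $D_4$ degeneracy, and the orientation reversal that makes the six parameters in $\Lambda_S$ genuinely inequivalent when $d=3n$). Two points in your write-up are thinner than they should be: (i) the step ``a degeneracy arises only when two critical points collide'' needs the input, taken from Theorem 4.1 and \cite{esc17}, that for $\mu\notin\Lambda_S$ all critical points are Morse (simple arrangements), otherwise a degeneracy without collision is not excluded; and (ii) for $d=3n$ you only show that your particular rotation identity fails, whereas to count six distinct bifurcations one should note, as the paper does, that the residual symmetry relating $\mu$ and $\mu+2\pi/3$ is orientation-reversing on trajectories and hence does not identify the bifurcations of the oriented Hamiltonian flow. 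Neither issue is fatal, but both should be stated explicitly, and the functional identity itself (including the behaviour of the prefactor $\lambda_{d,\mu}$) is still only a sketch in your text.
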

  \begin{proof}
We consider both the degenerate and the nondegenerate critical points of $P_{d}(x,y,\mu)$. The bifurcations occur when degenerate critical points appear. In the proof of Theorem 4.1 (Theorem 2.2 in  \cite{esc17}) there is a description of the type of line arrangements associated with the level curve $P_{d}(x,y,\mu)=0$ in terms of the behaviour of $\zeta_{M}(\mu), \zeta_{m_{1}}(\mu), \zeta_{m_{2}}(\mu)$ for $\mu \in [0,2\pi)=\Lambda_{\Sigma}\cup \Lambda_{S}\cup \Lambda_{3}$ (\cite{esc17}, Fig.1):  
\par
(1) $\mu \in \Lambda_{\Sigma}$: one of the functions $\zeta$ has a maximum with critical value $8$ and the other two have value $-1$, or a minimum with critical value $-8$ and the other two $1$. The critical points of $P_{d}(x,y,\mu)$ are nondegenerate and the curves $P_{d}(x,y,\mu)=0$ are simple arrangements of lines denoted by $\Sigma^{d}_{D}$ ($\forall d$) and $\Sigma^{d}_{C}$ (only for $d=3n$). 
\par
(2) $\mu \in \Lambda_{S}$: one $\zeta$ has an inflection point with critical value $0$ and the others have values $-3\sqrt{3}$ and $3\sqrt{3}$, the curves $P_{d}(x,y,\mu)=0$ being simplicial arrangements denoted by  $S^{d}_{D}$ ($\forall d$) or $S^{d}_{C}$ (only for $d=3n$) hence we find degenerate critical points in those cases. 
\par
(3) $\mu \in \Lambda_{3}$: the curves $P_{d}(x,y,\mu)=0$ are simple arrangements not of type $\Sigma^{d}_{C}$ and $\Sigma^{d}_{D}$ but they have the same number of critical points with critical value $0$, namely $d \choose 2$, and the other critical values transform as described in what follows. 
 \bigskip\par 
There are two types of critical or stationary points of the associated Hamiltonian system: saddles in the vertices of the line arrangements, and centers at the interior of the triangles or the closed non-triangular cells (quadrilateral, pentagonal or hexagonal cells).
We now analyse the evolution of the critical points of $P_{d}(x,y,\mu)$ as $\mu$ varies for the cases (a) $d \neq 3n$ and (b) $d=3n$, which are qualitatively different:
 \bigskip\par
    \par\noindent
(a) $d \neq 3n$
 \bigskip\par
 (a1) $0 \le \mu \le \frac{\pi}{6}$. The critical values of $P_{d}(x,y,0)$, associated with the simple line arrangements $\Sigma^{d}_{D}$ with dihedral symmetry, have been analysed in (Lemma 1, \cite{esc16}). There are $\frac{(d-1)(d-2)}{6}$ local maxima with $\zeta_{M}=8$ inside the non triangular zones. The $\frac{(d-1)(d-2)}{6}$ local minima with $\zeta_{m_{1}}=-1$ and the $\frac{(d-1)(d-2)}{6}$ local minima with $\zeta_{m_{2}}=-1$ are located inside the triangular ones. As $\mu$ increases its value within the interval $[0,\frac{\pi}{6}]$ the curves $P_{d}(x,y,\mu)=0$ are still simple arrangements (\cite{esc17}, Fig.2, for $d=5$), the triangular zones containing the critical points with $\zeta_{m_{1}}$ decrease in size and, for each triangle (half of the total), its three vertices, which are $A_{1}$ singularities of the curve $P_{d}(x,y,\mu)=0$ corresponding to $\zeta=0$, collapse into one $D_{4}$ singularity when $\mu=\frac{\pi}{6}$: $\zeta_{m_{1}}(\frac{\pi}{6})=0$ and the values $(u_{c},v_{c})$ of the critical points with $\zeta(\frac{\pi}{6})=0$ are (d1) $(\frac{3k-1}{3d},\frac{3l-1}{3d})$, (d2) $(\frac{3k-1}{3d},\frac{3l+2}{3d})$, (d3) $(\frac{3k+2}{3d},\frac{3l-1}{3d})$, which, by appropriate relabelling, coincide with those linked to $\zeta_{m_{1}}$ which are always in ${\rm Int} \delta$. The remaining critical points with $\zeta(\frac{\pi}{6})=0$ are on $\delta$ (\cite{esc17}, Fig.2(d), for $d=5$) and if we consider the union of $\delta$ and the simplicial arrangement $S^{d}_{D}$ then they correspond to $d-1$ $D_{6}$ singularities \cite{esc21}. 
    \par
(a2) $\frac{\pi}{6}<\mu \le \frac{\pi}{3}$. Each non-hyperbolic point of $P_{d}(x,y, \frac{\pi}{6})$, which is a $D_{4}$ singularity of the curve $P_{d}(x,y, \frac{\pi}{6})=0$, splits into 3 saddles and 1 center of the Hamiltonian system. We have local minima inside the $\frac{(d-1)(d-2)}{6}$ non triangular zones containing the critical value $\zeta_{m_{2}}$ which decreases from $\zeta_{m_{2}}(\frac{\pi}{6})=-3 \sqrt{3}$ until $\zeta_{m_{2}}(\frac{\pi}{3})=-8$. The local maxima are inside the $\frac{(d-1)(d-2)}{6}$ triangular zones containing $\zeta_{M}$ decreasing from $\zeta_{M}(\frac{\pi}{6})=3 \sqrt{3}$ until $\zeta_{M}(\frac{\pi}{3})=1$, and inside the $\frac{(d-1)(d-2)}{6}$ triangular zones containing $\zeta_{m_{1}}$ increasing from $\zeta_{m_{1}}(\frac{\pi}{6})=0$ until $\zeta_{m_{1}}(\frac{\pi}{3})=1$. The curve at $\mu = \frac{\pi}{3}$ is, as for $\mu = 0$, a simple arrangement $\Sigma^{d}_{D}$. The first bifurcation therefore occurs when $\mu=\frac{\pi}{6}$.
   \par
 (a3) $\frac{\pi}{3}<\mu  \le \frac{\pi}{2}$. The  vertices of the $\frac{(d-1)(d-2)}{6}$ triangles containing local maxima with a critical value $\zeta_{M}$ in the interior collapse into $D_{4}$ singularities when $\mu=\frac{\pi}{2}$ analogously to the cases for $\zeta_{m_{1}}$ in (a1): $\zeta_{M}(\frac{\pi}{2})=0$ and the values $(u_{c},v_{c})$ of the critical points with $\zeta(\frac{\pi}{2})=0$ are (d1) $(\frac{3k-2}{3d},\frac{3l-2}{3d})$, (d2) $(\frac{3k-2}{3d},\frac{3l+4}{3d})$, (d3) $(\frac{3k+4}{3d},\frac{3l-2}{3d})$, which can be rewritten as $(\frac{3k+1}{3d},\frac{3l+1}{3d})$. The other $d-1$ critical points with $\zeta(\frac{\pi}{2})=0$ are in $\delta$ and remain $A_{1}$ singularites of the planar curve.
 \par
 (a4) $\frac{\pi}{2}<\mu  \le \frac{2\pi}{3}$. The second bifurcation takes place when $\mu=\frac{\pi}{2}$. The curves and trajectories for $\mu=\frac{2\pi}{3}$ are, up to a rotation, the same as those for $\mu=0$ and there are no more distinct bifurcations for $\mu>\frac{\pi}{2}$.  
 \bigskip\par 
    \par\noindent
(b) $d =3n$
 \bigskip\par
 (b1) $0 \le \mu \le \frac{\pi}{6}$. The curves $P_{d}(x,y, 0)=0$ are formed by the simple line arrangements $\Sigma^{d}_{C}$ with cyclic symmetry. There are $\frac{d(d-3)}{6}$ local maxima with $\zeta_{M}=8$ inside the non triangular zones. Located at the triangular zones there are $\frac{d(d-3)}{6}$ local minima with $\zeta_{m_{1}}=-1$ and $1+\frac{d(d-3)}{6}$ local minima with $\zeta_{m_{2}}=-1$ (Lemma 1, \cite{esc16}). In the interval $[0,\frac{\pi}{6})$  the curves are still simple arrangements (\cite{esc17}, Fig.3, for $d=6$). The triangular zones containing the critical points with critical value $\zeta_{m_{1}}$ shrink as $\mu$ increases its value and, for each triangle, the three $A_{1}$ singularities of the curve situated at its vertices, which correspond to $\zeta=0$, collapse into one $D_{4}$ singularity when $\mu=\frac{\pi}{6}$, because $\zeta_{m_{1}}(\frac{\pi}{6})=0$ and the values $(u_{c},v_{c})$ of the critical points with $\zeta(\frac{\pi}{6})=0$ are (d1) $(\frac{3k-1}{3d},\frac{3l-1}{3d})$, (d2) $(\frac{3k-1}{3d},\frac{3l+2}{3d})$, (d3) $(\frac{3k+2}{3d},\frac{3l-1}{3d})$, which, by appropriate relabelling, coincide with those corresponding to $\zeta_{m_{1}}$. The remaining critical points with $\zeta(\frac{\pi}{6})=0$ are in $\delta$ and correspond  to $d$ $D_{6}$ singularities of $\delta \cup S^{d}_{C}$  \cite{esc21}. 
    \par
 (b2) $\frac{\pi}{6}<\mu  \le \frac{\pi}{3}$. Each one of the $d$ non-hyperbolic points of $P_{d}(x,y, \frac{\pi}{6})$ splits into 3 saddles and 1 center of Eq.(5.2). The local minima are inside the $1+\frac{d(d-3)}{6}$ non triangular zones with critical value $\zeta_{m_{2}}$ which decreases from $\zeta_{m_{2}}(\frac{\pi}{6})=-3 \sqrt{3}$ until $\zeta_{m_{2}}(\frac{\pi}{3})=-8$. There are local maxima inside the $\frac{d(d-3)}{6}$ triangular zones with critical value $\zeta_{M}$, which decreases from $\zeta_{M}(\frac{\pi}{6})=3 \sqrt{3}$ until $\zeta_{M}(\frac{\pi}{3})=1$, and there are $\frac{d(d-3)}{6}$ triangular zones containing local maxima with critical value $\zeta_{m_{1}}$ which increases from $\zeta_{m_{1}}(\frac{\pi}{6})=0$ until $\zeta_{m_{1}}(\frac{\pi}{3})=1$. The curve at $\mu = \frac{\pi}{3}$ is, in contrast to $\mu = 0$, a simple arrangement $\Sigma^{d}_{D}$. As in (a) the first bifurcation is found when $\mu=\frac{\pi}{6}$.
   \par
 (b3) $\frac{\pi}{3}<\mu  \le \frac{\pi}{2}$. When $\mu=\frac{\pi}{2}$ the  vertices of the $\frac{d(d-3)}{6}$ triangles containing in the interior  local maxima with a critical value $\zeta_{M}$ produce $D_{4}$ singularities: $\zeta_{M}(\frac{\pi}{2})=0$ and the values $(u_{c},v_{c})$ of the critical points with $\zeta(\frac{\pi}{2})=0$ are (d1) $(\frac{3k-2}{3d},\frac{3l-2}{3d})$, (d2) $(\frac{3k-2}{3d},\frac{3l+4}{3d})$, (d3) $(\frac{3k+4}{3d},\frac{3l-2}{3d})$, which can be rewritten as $(\frac{3k+1}{3d},\frac{3l+1}{3d})$. In the deltoid curve there are $d$ critical points with critical value $\zeta(\frac{\pi}{2})=0$ that are $A_{1}$ singularites of $P_{d}(x,y, \frac{\pi}{2})=0$ which is a simplicial arrangement $S^{d}_{C}$ as in the case $\mu=\frac{\pi}{6}$.
 \par
 (b4) $\frac{\pi}{2}<\mu  \le \frac{2\pi}{3}$. Also as in (a) the second bifurcation takes place when $\mu=\frac{\pi}{2}$.  Each $D_{4}$ singularity of $P_{d}(x,y, \frac{\pi}{2})=0$ splits into 3 saddles and 1 center of Eq.(5.2). We have local maxima inside the $\frac{d(d-3)}{6}$ non triangular zones with the critical value $\zeta_{m_{1}}$ increasing from $\zeta_{m_{1}}(\frac{\pi}{2})=3 \sqrt{3}$ until $\zeta_{m_{1}}(\frac{2\pi}{3})=8$, local minima inside the $\frac{d(d-3)}{6}$ triangular zones with $\zeta_{M}$ decreasing from  $\zeta_{M}(\frac{\pi}{2})=0$ until $\zeta_{M}(\frac{2\pi}{3})=-1$, and local minima inside the $1+\frac{d(d-3)}{6}$ triangular zones with $\zeta_{m_{2}}$ increasing from $\zeta_{m_{2}}(\frac{\pi}{2})=-3 \sqrt{3}$ until $\zeta_{m_{2}}(\frac{2\pi}{3})=-1$. The level curves for $\mu=\frac{2\pi}{3}$ are, up to a rotation and reflection, the same as those for $\mu=0$ but the trajectories have opposite orientations.
     \par
 (b5) $\frac{2\pi}{3}<\mu  \le \frac{5\pi}{6}$. As $\mu$ increases its value within the interval $(\frac{2\pi}{3},\frac{5\pi}{6}]$  the curves are still simple arrangements, the $1+\frac{d(d-3)}{6}$ triangular zones containing the critical points with $\zeta_{m_{2}}$ decrease in size and give one $D_{4}$ singularity when $\mu=\frac{5 \pi}{6}$: $\zeta_{m_{2}}(\frac{5\pi}{6})=0$ and the values $(u_{c},v_{c})$ of the critical points with $\zeta(\frac{5\pi}{6})=0$ are (d1) $(\frac{k-1}{d},\frac{l-1}{d})$, (d2) $(\frac{k-1}{d},\frac{l+2}{d})$, (d3) $(\frac{k+2}{d},\frac{l-1}{d})$, which, by appropriate relabelling, coincide with those associated with $\zeta_{m_{2}}$. The remaining critical points with $\zeta(\frac{5\pi}{6})=0$ are in $\delta$  and correspond to the $d$ $D_{6}$ singularities of $\delta \cup S^{d}_{D}$ \cite{esc21}. 
      \par
 (b6) $\frac{5\pi}{6}<\mu  \le \pi$. The third bifurcation occurs when $\mu=\frac{5\pi}{6}$. Each non-hyperbolic point of $P_{d}(x,y, \frac{5\pi}{6})$ splits into 3 saddles and 1 center of the Hamiltonian system. We have local minima inside the $\frac{d(d-3)}{6}$ non triangular zones with  $\zeta_{M}$ decreasing from $\zeta_{M}( \frac{5\pi}{6})=-3 \sqrt{3}$ until $\zeta_{M}(\pi)=-8$, local maxima inside the $1+\frac{d(d-3)}{6}$ triangular zones with $\zeta_{m_{2}}$ increasing from $\zeta_{m_{2}}( \frac{5\pi}{6})=0$ until $\zeta_{m_{2}}(\pi)=1$, and the $\frac{d(d-3)}{6}$ triangular zones contain local maxima with $\zeta_{m_{1}}$ decreasing from $\zeta_{m_{1}}( \frac{5\pi}{6})=3 \sqrt{3}$ until $\zeta_{m_{1}}(\pi)=1$. The level curves for $\mu=\pi$ are the same as those for $\mu=0$ but the trajectories have opposite orientations.
      \par
 (b7) $\pi<\mu  \le 2\pi$. The level curves are the same as $0<\mu  \le \pi$ but the trajectories have also opposite orientations. We then find three additional bifurcation points at $\mu=\frac{7\pi}{6}, \frac{9\pi}{6}, \frac{11\pi}{6}$.

\end{proof}
\par
In Figs.4,5 we have represented the trajectories of the Hamiltonian system corresponding to $P_{5}(x,y,\mu)$. The level curve $P_{5}(x,y,\mu)=0$ is drawn on the left and the trajectories of the Hamiltonian flow on the right. There are two bifurcation points at $\mu=\frac{k\pi}{30}$, $k=5,15$. The case $k=20$ is topologically equivalent to $k=0$: they are related by a rotation. The non-hyperbolic points are the result of the collapsing of 3 saddles and 1 center. Both bifurcations consist in the splitting of the non-hyperbolic points ($D_{4}$-singularities of the level curves) into 3 saddles ($A_{1}$-singularities of the level curves) and 1 center in the interior of the triangle whose vertices are the 3 saddles. 
\par     
The trajectories for $d=6$ can be seen in Figs.6,7 where we have represented both the level curves and the orbits in the same coordinate frames. We can see some qualitative differences in relation to the case $d=5$, among them the appearance of two types of both simple and simplicial line arrangements with dihedral and cyclic symmetries. The parameter values of the bifurcation points are $\mu=\frac{k\pi}{36}$, $k=6,18,30,42,54,66$, which correspond to all $\mu \in \Lambda_{S}$.
\par
  \begin{figure}[h]
 \includegraphics[width=32pc]{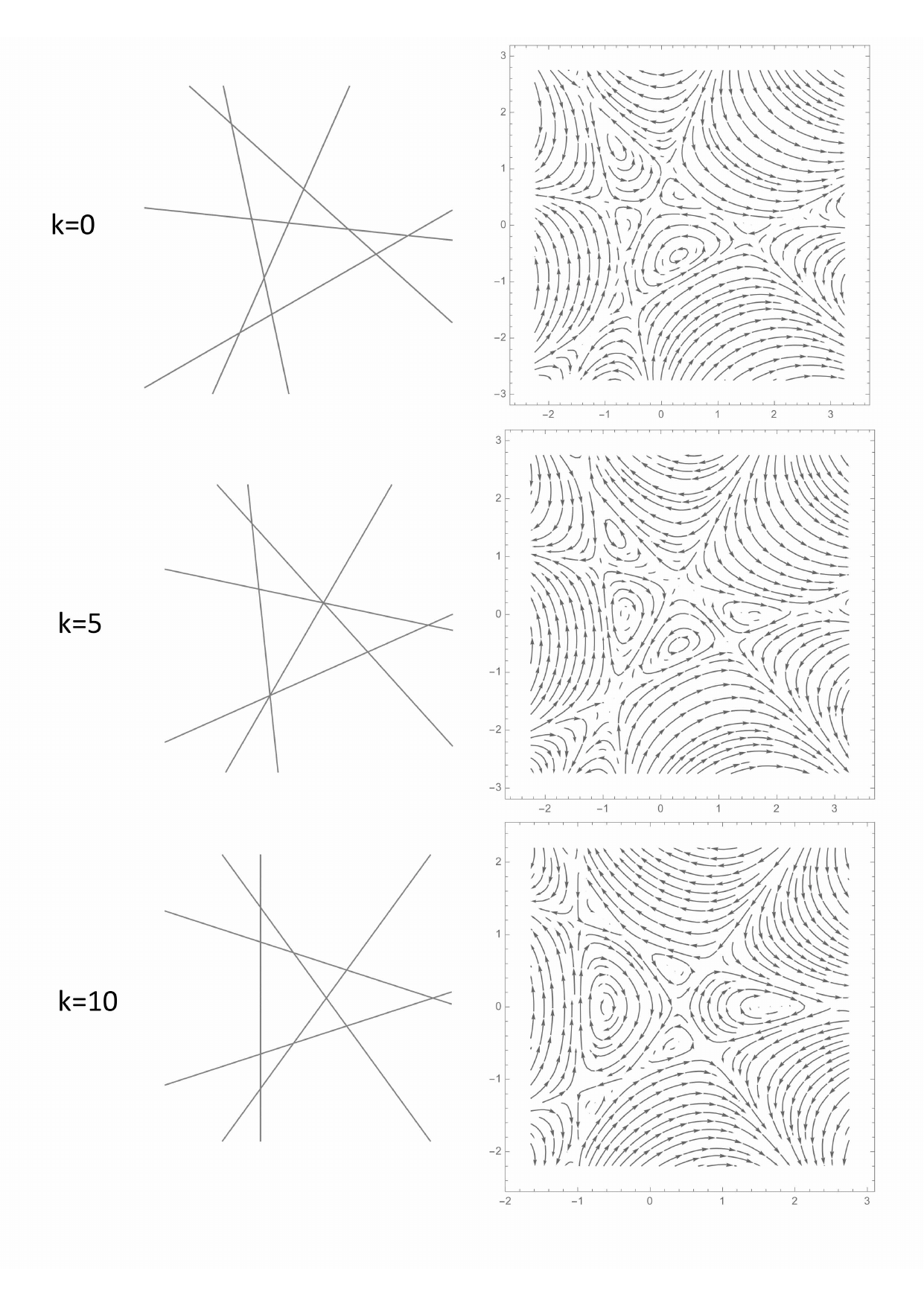}
\caption{\label{label}Level curves $P_{5}(x,y,\mu)=0$ (left) and trajectories (right) for  $\mu=k\pi/30$, $k=0,5,10$. Local minima (maxima) of $P_{5}(x,y,\mu)$ correspond to centers with trajectories having clockwise (anticlockwise) orientation.}
\end{figure}
\par
  \begin{figure}[h]
 \includegraphics[width=30pc]{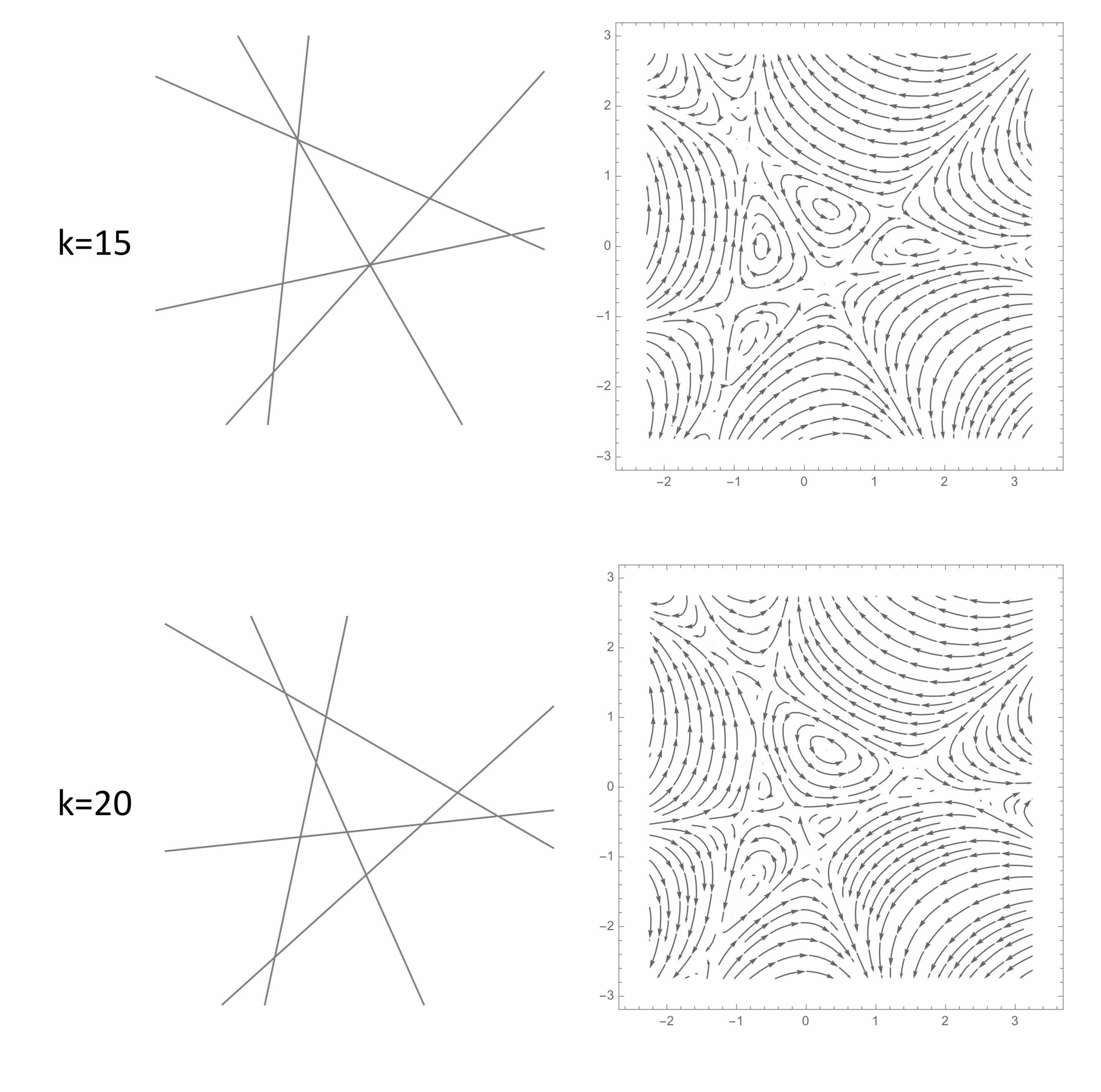}
\caption{\label{label}Level curves $P_{5}(x,y,\mu)=0$ (left) and trajectories (right) for $\mu=k\pi/30$, $k=15,20$.}
\end{figure}
\par
  \begin{figure}[h]
 \includegraphics[width=30pc]{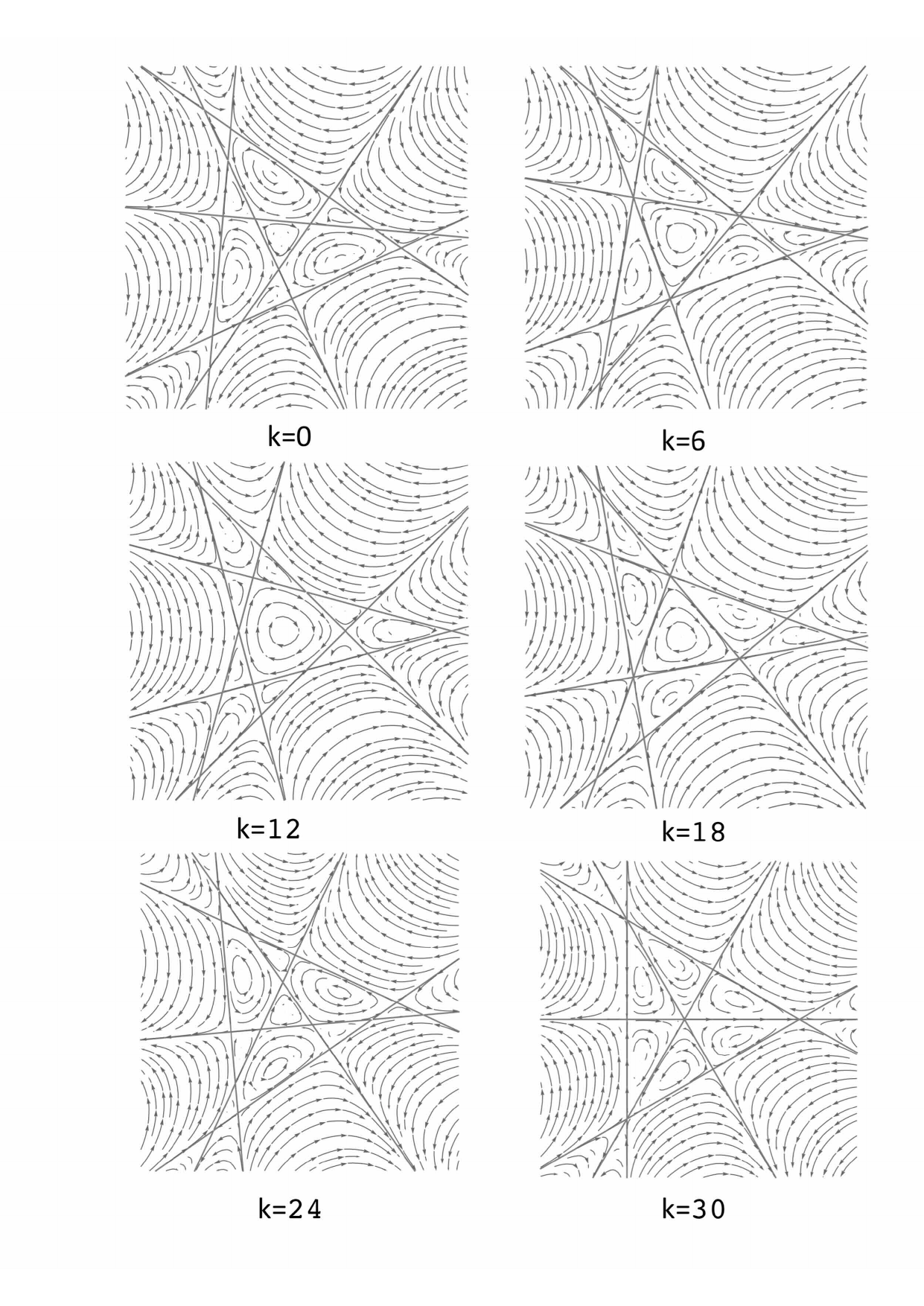}
\caption{\label{label}Level curves $P_{6}(x,y,\mu)=0$ and trajectories for  $\mu=k\pi/36$, $k=0,6,12,18,24,30$.}
\end{figure}
\par
  \begin{figure}[h]
 \includegraphics[width=30pc]{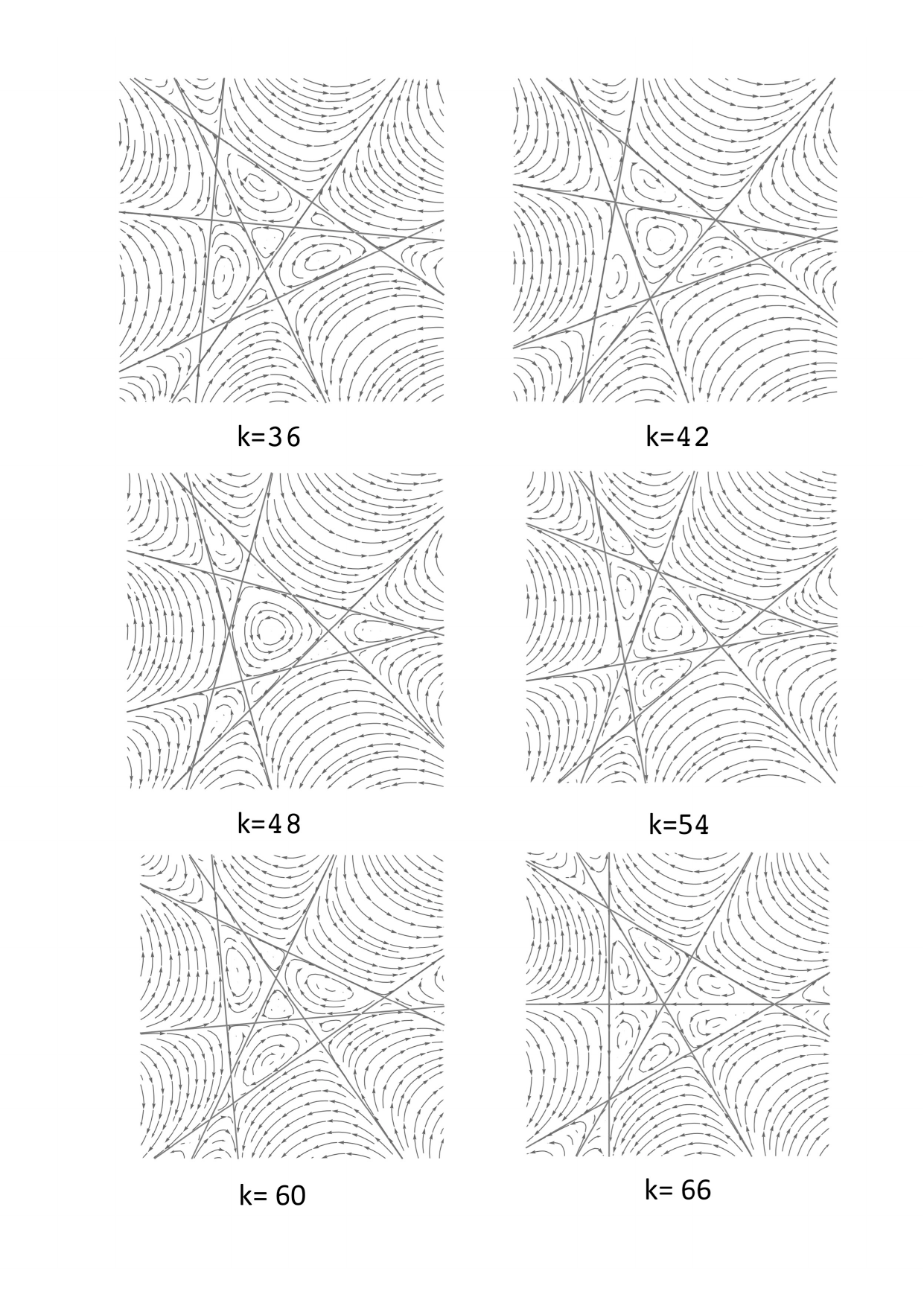}
\caption{\label{label}Level curves $P_{6}(x,y,\mu)=0$ and trajectories for $\mu=k\pi/36$, $k=36,42,48,54,60,66$. The curves are the same as in Fig.6 for $k=0,6,12,18,24,30$ respectively, but the trajectories have reversed orientations.}
\end{figure}

\newpage

% ------------------------------------------------------------------------
\end{document}